\documentclass[10pt, conference, letterpaper]{IEEEtran}


\usepackage{listings}


\usepackage{multicol}

\usepackage{blindtext}
\usepackage{amsfonts}
\usepackage{amsmath, amsthm, amssymb}
\usepackage{times}
\usepackage{url}
\usepackage{verbatim}
\usepackage{array}
\usepackage{mathrsfs}
\usepackage{enumitem}
\usepackage{textcomp}

\setlength{\skip\footins}{5pt}
\setlength{\parskip}{3pt}
\newcommand{\subparagraph}{}
\usepackage[compact]{titlesec}
\setlength{\abovedisplayskip}{5pt}
\setlength{\belowdisplayskip}{5pt}
\setlength\abovedisplayshortskip{5pt}
\setlength\belowdisplayshortskip{5pt}

\usepackage[ruled, lined, linesnumbered]{algorithm2e}
\SetAlFnt{\small}

\SetCommentSty{mycommfont}

\usepackage{amsfonts}
\usepackage{bbm}
\usepackage{soul}
\usepackage{color}
\usepackage{graphics}
\usepackage{graphicx,epsfig}
\usepackage{lipsum,graphicx,subcaption}
\graphicspath{{figures/}}
\usepackage{caption}
\usepackage{subcaption}

\usepackage{cleveref}
\usepackage{wrapfig}
\usepackage{url}
\urlstyle{tt}

\usepackage{url}
\usepackage{amsmath,amssymb,epsfig,amsthm,psfrag,epsf, multirow,wrapfig, graphicx}
\usepackage[english]{babel}
\usepackage{epstopdf}
\usepackage{float}
\usepackage{color}
\usepackage[table,xcdraw]{xcolor}
\usepackage{amsmath}
\usepackage{amsfonts}
\usepackage{amssymb}
\usepackage{mathtools}
\DeclarePairedDelimiter{\ceil}{\lceil}{\rceil}
\usepackage{tabularx}
\usepackage{adjustbox}
\usepackage{physics}

\usepackage{amsfonts}
\usepackage{amssymb}
\usepackage{caption}
\usepackage{subcaption}
\usepackage{tabularx}
\usepackage{comment}
\usepackage[mathscr]{euscript}
\usepackage{amsbsy}
\usepackage{stmaryrd}
\usepackage{mathtools}
\usepackage[utf8]{inputenc}
\graphicspath{{figs/}}

\usepackage{booktabs}
\usepackage{graphicx,epsfig}
\usepackage{graphics}
\usepackage{multirow}
\usepackage{rotating}
\usepackage{caption}
\usepackage{subcaption}
\usepackage{amsmath}
\usepackage{amssymb}
\usepackage{makecell}

\setlength\aboverulesep{0pt}
\setlength\belowrulesep{0pt}


\DeclareMathOperator*{\argmax}{arg\,max}

\usepackage{mathtools}
\usepackage{comment}
\usepackage{multirow}

\usepackage{footnote}
\usepackage{tablefootnote}
\usepackage{scalerel}

\usepackage{tabularx}
\usepackage{mdframed}
\usepackage{lipsum}
\makeatletter
\newcommand*{\rom}[1]{\expandafter\@slowromancap\romannumeral #1@}
\makeatother
\usepackage{nccmath}

\usepackage{relsize}

\newtheorem{theorem}{Theorem}

\newtheorem{corollary}{Corollary}

\newtheorem{lemma}{Lemma}

\newcommand{\note}[1]{{\color{blue}{#1}}} 
\newcommand{\inred}[1]{{\color{red}{#1}}} 
\newcommand{\ingreen}[1]{{\color{green}{#1}}} 

\DeclareMathSymbol{\shortminus}{\mathbin}{AMSa}{"39}

\newcommand{\smallDelta}{{\text{$\scaleto{\Delta}{5pt}$}}}

\allowdisplaybreaks
%

\begin{document}
\def\eg{\mbox{\em e.g.}, }


\title{Unified Characterization and Precoding for Non-Stationary Channels}

\author{
\begin{tabular}[t]{c@{\extracolsep{8em}}c} 
Zhibin Zou, Maqsood Careem, Aveek Dutta & Ngwe Thawdar \\
Department of Electrical and Computer Engineering & US Air Force Research Laboratory \\ 
University at Albany SUNY, Albany, NY 12222 USA & Rome, NY, USA \\
\{{zzou2, mabdulcareem, adutta\}@albany.edu} & 
ngwe.thawdar@us.af.mil
\end{tabular}
}


    
\maketitle

\begin{abstract}
Modern wireless channels are increasingly dense and mobile making the channel highly non-stationary. The time-varying distribution and the existence of joint interference across multiple degrees of freedom (\eg users, antennas, frequency and symbols) in such channels render conventional precoding sub-optimal in practice, and have led to historically poor characterization of their statistics. The core of our work is the derivation of a high-order generalization of Mercer's Theorem to decompose the non-stationary channel into constituent fading sub-channels (2-D eigenfunctions) that are jointly orthogonal across its degrees of freedom. Consequently, transmitting these eigenfunctions with optimally derived coefficients eventually mitigates any interference across these dimensions and forms the foundation of the proposed joint spatio-temporal precoding. The precoded symbols directly reconstruct the data symbols at the receiver upon demodulation, thereby significantly reducing its computational burden, by alleviating the need for any complementary decoding. These eigenfunctions are paramount to extracting the second-order channel statistics, and therefore completely characterize the underlying channel. Theory and simulations show that such precoding leads to ${>}10^4{\times}$ BER improvement (at 20dB) over existing methods for non-stationary channels.
\end{abstract}

\section{Problem Statement}
\label{sec:intro}

Precoding at the transmitter is investigated in the literature and is relatively tractable when the wireless channel is stationary, by employing the gamut of linear algebraic and statistical tools to ensure interference-free communication \cite{Cho2010MIMObook,fatema2017massive,CostaDPC1983}. However, there are many instances, in modern and next Generation 
propagation environments such as mmWave, V2X, and massive-MIMO networks, where 
the channel is 
statistically non-stationary \cite{wang2018survey,huang2020general,mecklenbrauker2011vehicular} (the distribution is a function of time). 
This leads to sub-optimal and sometimes catastrophic performance even with state-of-the-art precoding \cite{AliNS0219} due to two factors: 
a) the time-dependence of the channel statistics,
and b) the existence of interference both jointly and independently across multiple dimensions (space (users/ antennas), frequency or time) in communication systems that leverage multiple degrees of freedom (\eg MU-MIMO, OFDM, OTFS \cite{OTFS_2018_Paper}).
This necessitates a unified characterization of the statistics of wireless channels that can also incorporate time-varying statistics, and novel precoding algorithms 
warrant flat-fading over the higher-dimensional interference profiles in non-stationary channels. 
%
Our solution to the above addresses a challenging open problem in the literature \cite{2006MatzOP}\footnote{This work is funded by the Air Force Research Laboratory Visiting Faculty Research Program (SA10032021050367), Rome, New York, USA.} : ``how to decompose non-stationary channels into independently fading sub-channels (along each degree of freedom) and how to precode using them", which is central to both characterizing channels and minimizing interference. 

\begin{figure}
    \centering
    \includegraphics[width=\linewidth]{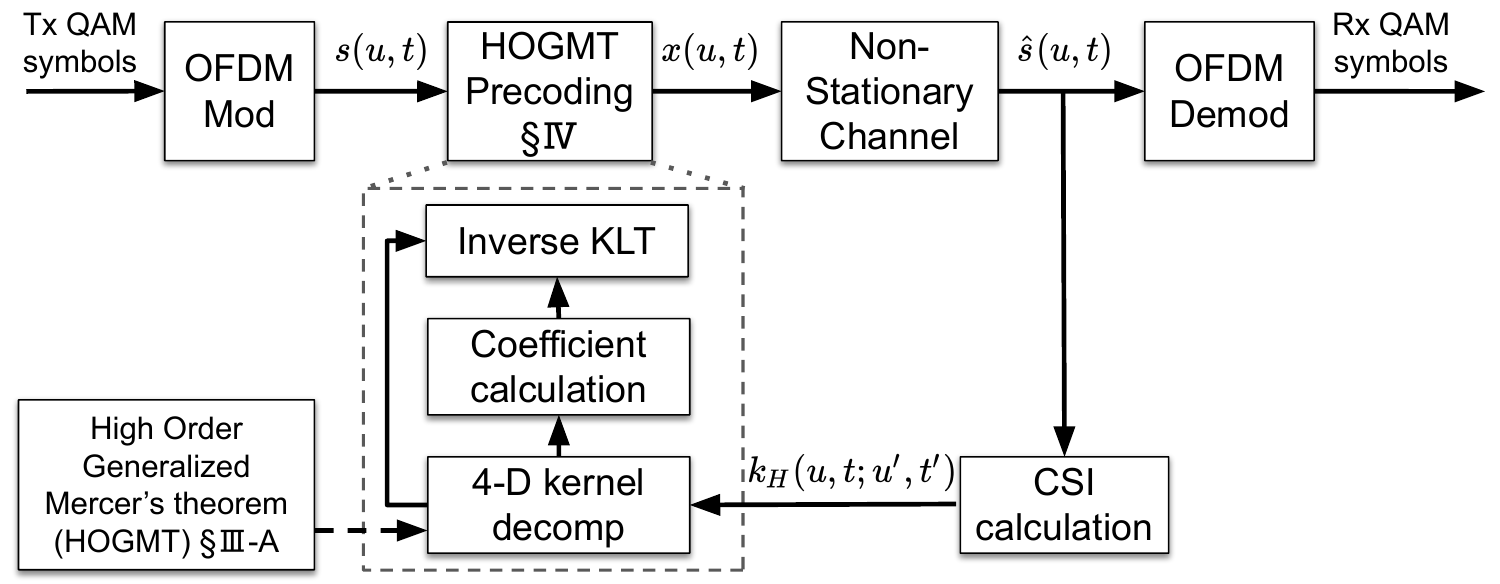}
    \caption{Unified characterization \& precoding
    } 
    \vspace{-20pt}
    \label{fig:Df}
\end{figure}
Unlike stationary channels,
the second order statistics of non-stationary channels are 4-dimensional, as they are functions of both time-frequency and delay-Doppler dimensions\footnote{Second order statistics of stationary channels 
depend only on the delay-Doppler (2-D) and hence can be extracted as a degenerate case of the non-stationary channel model, when its time-frequency dependence is constant.} \cite{MATZ20111}.
The core of our wireless channel characterization method, is the decomposition of this 4-dimensional channel kernel into 2-dimensional 
eigenfunctions that are jointly orthogonal across these dimensions, using a generalization of Mercer's Theorem to high-dimensional and asymmetric kernels. 
Unlike recent literature that only partially characterize the non-stationary channel using a select few local statistics~\cite{2018PMNS, bian2021general}, these eigenfunctions 
are used to extract any second-order statistics of the non-stationary channels that completely characterizes its distribution.  
Since any wireless channel model (\eg deterministic, stationary, frequency flat or selective) can be extracted from the general non-stationary channel kernel, the extracted eigenfunctions lead to a unified method to characterize the statistics of any wireless channel. 

Figure \ref{fig:Df} shows the data flow for joint spatio-temporal precoding at the transmitter.
The spatio-temporal CSI obtained from the receivers are used to construct a 4-dimensional channel kernel. 
In addition to spatial (inter-user or inter-antenna) or temporal (inter-symbol) interference, the time-varying kernel of non-stationary channels, induces joint
space-time interference.
We design a joint space-time precoding at the transmitter that 
involves combining the spatio-temporal eigenfunctions obtained by decomposing the space-time channel kernel, with optimal coefficients that 
minimize the least square error in the transmitted and received symbols.
Since the eigenfunctions are independently and jointly orthonomal sub-channels over space and time, 
precoding using them warrants flat-fading (interference-free communication) even in the presence of joint space-time interference. 
Further, these transmitted (precoded) symbols directly reconstruct the data symbols at the receiver when combined with calculated coefficients. 
Therefore, unlike existing precoding methods that require complementary decoding at the receiver\cite{Cho2010MIMObook}, we alleviate any need for complex receiver processing thereby significantly reducing its computational burden. 
Finally, the precoded symbols are scheduled to each user and are processed through the conventional transmitter signal processing blocks (\eg CP/ guard insertion) before transmission.
To the best of our knowledge, precoding for non-stationary channels is unprecedented in the literature. However, we include a comprehensive comparison with related precoding techniques in Appendix A-A.
\section{Background}
\label{sec:pre}

The wireless channel is typically expressed by a linear operator $H$, and the received signal $r(t)$ is given by $r(t){=}Hs(t)$, where $s(t)$ is the transmitted signal. The physics of the impact of $H$ on $s(t)$ is described using the delays and Doppler shifts in the multipath propagation~\cite{MATZ20111} given by \eqref{eq:H_delay_Doppler},
\begin{equation}
    r(t) = \sum\nolimits_{p=1}^P h_p s(t-\tau_p) e^{j2\pi \nu_p t}
    \label{eq:H_delay_Doppler}
\end{equation}
where $h_p$, $\tau_p$ and $\nu_p$ are the path attenuation factor, time delay and Doppler shift for path $p$, respectively. 
\eqref{eq:H_delay_Doppler} is expressed in terms of the overall delay $\tau$ and Doppler shift $\nu$ \cite{MATZ20111} in \eqref{eq:S_H}, 
%
\begin{align}
    r(t) &= \iint S_H(\tau, \nu) s(t{-} \tau) e^{j2\pi \nu t} ~d\tau ~d\nu \label{eq:S_H} \\
    & = \int L_H(t, f) S(f) e^{j2\pi tf} ~df \\
    & = \int h(t, \tau) s(t{-} \tau) ~d\tau \label{eq:h_relation}
\end{align}
where $S_H(\tau, \nu)$ is the \textit{(delay-Doppler) spreading function} of channel $H$, which describes the combined attenuation factor for all paths in the delay-Doppler domain. $S(f)$ is the Fourier transform of $s(t)$ and the time-frequency (TF) domain representation of $H$ is characterized by its \textit{TF transfer function}, $L_H(t,f)$, which is obtained by the 2-D Fourier transform of $S_H(\tau, \nu)$ as in \eqref{eq:TF_SH}. 
The time-varying impulse response $h(t,\tau)$ is obtained as the Inverse Fourier transform of $S_H(\tau, \nu)$ from the Doppler domain to the time domain as in  \eqref{eq:h_SH}.
\begin{align}
    L_H(t,f) &{=} \iint S_H(\tau, \nu) e^{j2\pi (t\nu{-} f \tau)} ~d\tau ~d\nu \label{eq:TF_SH}\\
    h(t,\tau) &{=} \int S_H(\tau, \nu) e^{j2\pi t \nu} ~d\nu \label{eq:h_SH}
\end{align}
%
%
Figures \ref{fig:SH} and \ref{fig:LH} show the time-varying response and TF transfer function for an example of a time-varying channel. 
For stationary channels, the TF transfer function is a stationary process with $\mathbb{E}\{L_H(t, f) L_H^*(t',f')\} {=} R_H(t{-}t',f{-}f')$, and the spreading function is a white process (uncorrelated scattering), i.e., $\mathbb{E}\{S_H(\tau, \nu) S_H^*(\tau',\nu')\} {=} C_H(\tau,\nu) \delta (\tau{-} \tau') \delta(\nu{-} \nu')$, where $\delta(\cdot)$ is the Dirac delta function.
$C_H(\tau,\nu)$ and $R_H(t-t',f-f')$ are the \textit{scattering function} and \textit{TF correlation function}, respectively, which are related via 2-D Fourier transform, 
\begin{equation}
    C_H(\tau,\nu) = \iint R_H(\Delta t, \Delta f) e^{-j2\pi(\nu \Delta t -  \tau \Delta f)} ~d\Delta t ~d\Delta f
\end{equation}
In contrast, for non-stationary channels, the TF transfer function is non-stationary process and the spreading function is a non-white process. 
Therefore, a \textit{local scattering function} (LSF) $\mathcal{C}_H(t,f;\tau,\nu)$ \cite{MATZ20111} is defined to extend $C_H(\tau,\nu)$ to the non-stationary channels in \eqref{eq:LSF}. 
Similarly, the \textit{channel correlation function} (CCF) $\mathcal{R}(\Delta t, \Delta f;\Delta \tau, \Delta \nu)$ generalizes $R_H(\Delta t, \Delta f)$ to the non-stationary case in \eqref{eq:CCF}.
\begin{align}
\begin{split}
    &\mathcal{C}_H(t,f ; \tau,\nu)  \label{eq:LSF}\\
    & {=} {\iint} R_L(t, f; \Delta t, \Delta f) e^{-j2\pi(\nu \Delta t -  \tau \Delta f)} ~d\Delta t ~d\Delta f  \\
    & {=} \iint R_S(\tau, \nu; \Delta \tau, \Delta \nu) e^{-j2\pi(t \Delta \nu -  f \Delta \tau)} ~d\Delta \tau ~d\Delta \nu \\
\end{split}\\
\begin{split}
    &\mathcal{R}(\Delta t, \Delta f;\Delta \tau, \Delta \nu) \label{eq:CCF}\\
    & {=} \iint R_L(t, f; \Delta t, \Delta f) e^{-j2\pi(\Delta \nu  t -  \Delta \tau  f)} ~d t ~d f \\
    & {=} \iint R_S(\tau, \nu; \Delta \tau, \Delta \nu) e^{-j2\pi(\Delta t \nu -  \Delta f  \tau)} ~d \tau ~d \nu
\end{split}
\end{align}
where $R_L(t, f; \Delta t, \Delta f){=}\mathbb{E}\{L_H(t, f {+} \Delta f) L_H^*(t {-} \Delta t, f)\}$ and $R_S(\tau, \nu; \Delta \tau, \Delta \nu){=}\mathbb{E}\{S_H(\tau, \nu {+} \Delta \nu) S_H^*(\tau{-} \Delta \tau, \nu)\}$. 
For stationary channels, CCF reduces to TF correlation function $\mathcal{R}(\Delta t, \Delta f;\Delta \tau, \Delta \nu){=}R_H(\Delta t, \Delta f) \delta(\Delta t) \delta(\Delta f)$.
%
\vspace{-10pt}
\begin{figure}[h]
\begin{subfigure}{.24\textwidth}
  \centering
\includegraphics[width=1\linewidth]{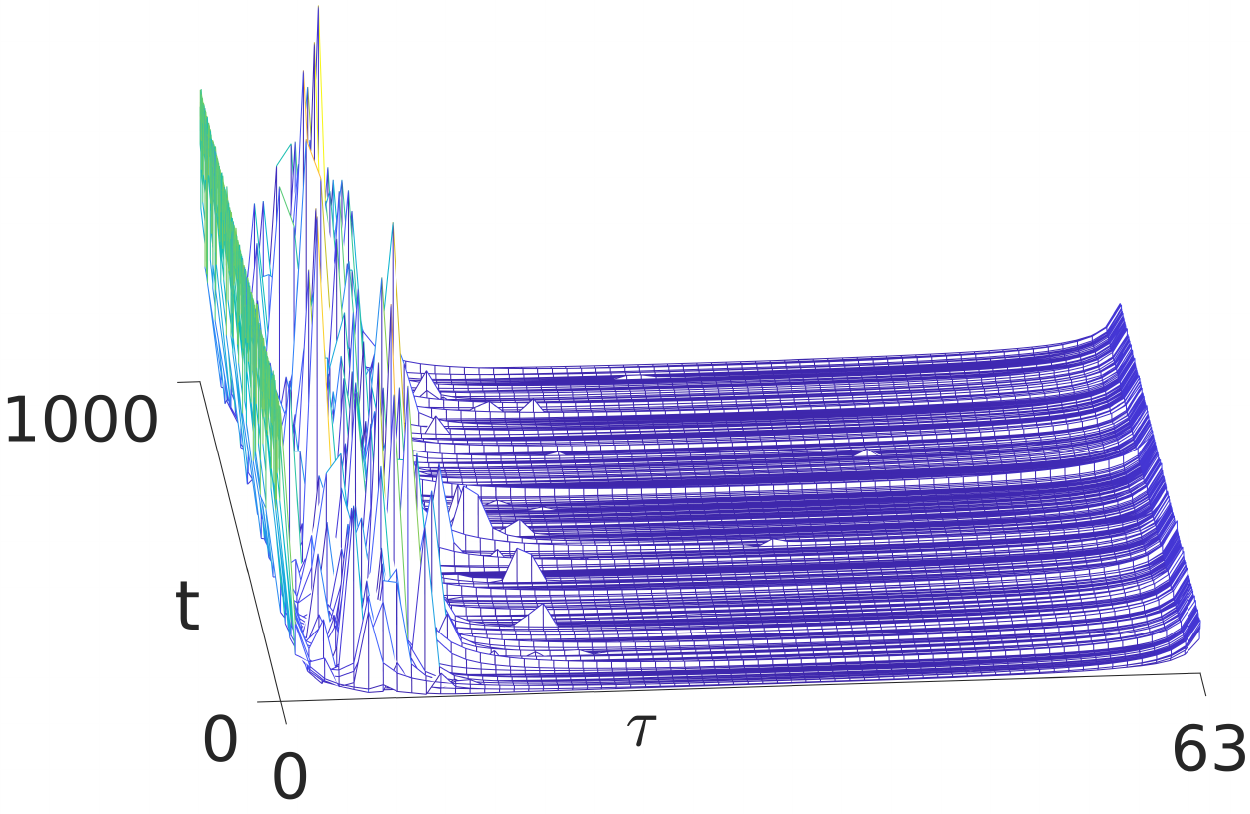}
  \vspace{-20pt}
  \caption{Time-varying response $h(t,\tau)$}
  \label{fig:SH}
\end{subfigure}
\begin{subfigure}{.24\textwidth}
  \centering
  \includegraphics[width=1\linewidth]{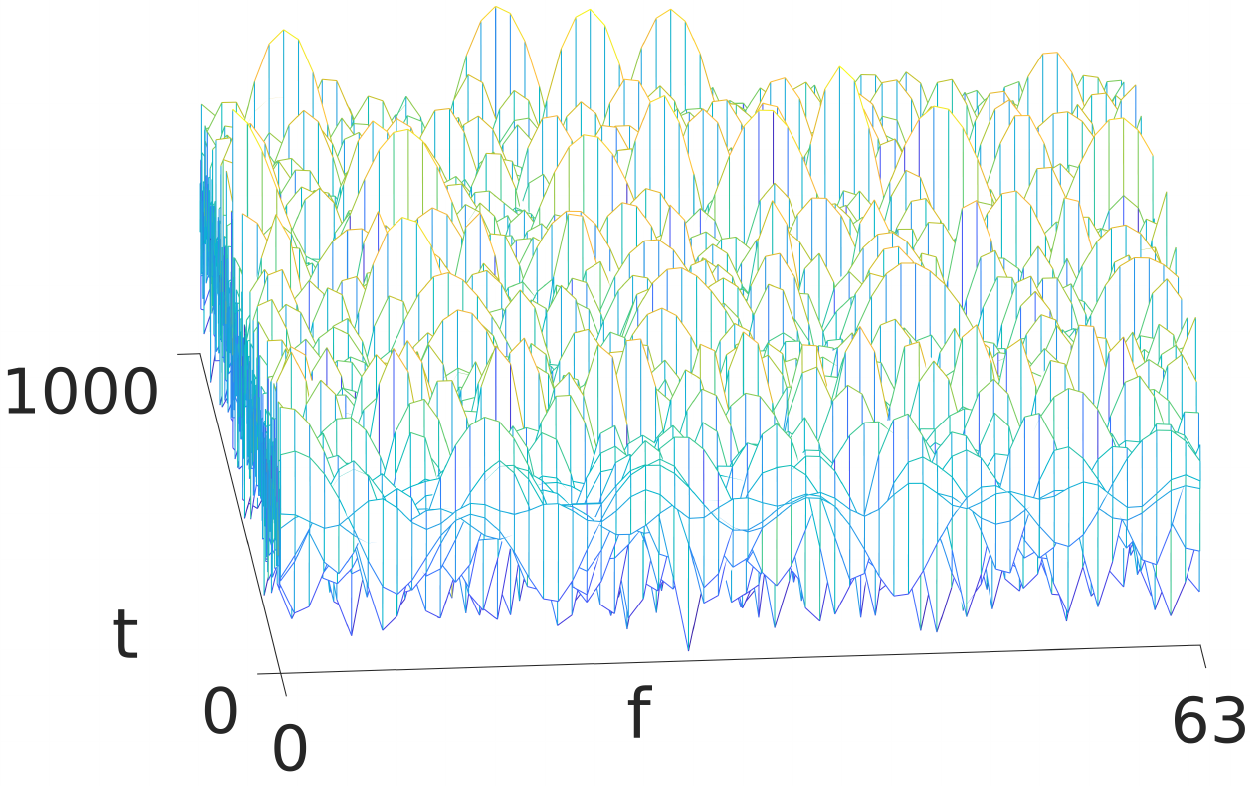}
  \vspace{-20pt}
  \caption{TF transfer function $L_H(t,f)$}
  \label{fig:LH}
\end{subfigure}
\caption{Illustration of a non-stationary channel}
  \vspace{-10pt}
\label{fig:toychannel}
\end{figure}

\section{Non-stationary channel decomposition and characterization}
\label{sec:NSdec}
The analysis of non-stationary channels is 
complicated as its statistics vary across both time-frequency and delay-Doppler domains resulting in 4-D second order statistics \cite{Matz2005NS}, which motivates the need for a unified characterization of wireless channels\footnote{Any channel can be generated as a special case of the non-stationary channel. Therefore a characterization of non-stationary channels would generalize to any other wireless channel \cite{MATZ20111}.}.
Wireless channels are completely characterized by their statistics, however they are difficult to extract for non-stationary channels, due to their time dependence.
%
%
Therefore, we start by expressing the channel $H$ using an atomic channel $G$ and the 4-D channel kernel $\mathcal{H}(t, f ; \tau, \nu)$~\cite{Matz2005NS} as in \eqref{eq:atomic}, 
\begin{align}
\label{eq:atomic}
   H = \iiiint \mathcal{H}(t, f ; \tau, \nu) G_{t,f}^{\tau,\nu} ~dt ~df ~d\tau ~dv
\end{align}
where $G$ is a normalized $(||G||{=}1)$ linear prototype system whose transfer function $L_G(t, f)$ is smooth and localized about the origin of the TF plane. $G_{t,f}^{\tau,\nu} {=} S_{t, f{+}\nu} G S_{t{-}\tau, f}^+$ means that the atomic channel $G$ shifts the signal components localized at $(t{-}\tau, f)$ to $(t, f{+}\nu)$ on the TF plane. $S_{\tau, \nu}$ is TF shift operator defined as $(S_{\tau, \nu}s)(t) {=} s(t{-}\tau) e^{j2 \pi \nu t}$. 
Then the channel kernel $\mathcal{H}(t, f ; \tau, \nu)$ is given by \eqref{eq:compute_H}.
\begin{align}
\label{eq:compute_H}
&\mathcal{H}(t, f ; \tau, \nu)=\left\langle H, G_{t, f}^{\tau, \nu}\right\rangle  \\
& {=}\mathrm{e}^{j 2 \pi f \tau} \iint L_{H}\left(t^{\prime}, f^{\prime}\right) L_{G}^{*}\left(t^{\prime}{\shortminus}t, f^{\prime}{\shortminus}f\right) 
 \mathrm{e}^{{-}j 2 \pi\left(\nu t^{\prime}{-}\tau f^{\prime}\right)} \mathrm{d} t^{\prime} \mathrm{d} f^{\prime} \nonumber
 \end{align}

The statistics of any wireless channel 
can always be obtained from the above 4-D channel kernel. Therefore, decomposing this kernel into fundamental basis allows us to derive a unified form to characterize any wireless channel.

\subsection{Channel decomposition}
4-D channel kernel decomposition into orthonormal 2-D kernels is unprecedented the literature, but is essential to mitigate joint interference in the 2-D space and to completely characterize non-stationary channels.  
While Mercer's theorem \cite{1909Mercer} provides a method to decompose symmetric 2-D kernels into the same eigenfunctions along different dimensions, it cannot directly decompose 4-D channel kernels due to their high-dimensionality and since the kernel is not necessarily symmetric in the time-frequency delay-Doppler domains. 
Karhunen–Loève transform (KLT) \cite{wang2008karhunen}
provides a method to decompose kernels into component eigenfunctions of the same dimension, however is unable to decompose into orthonormal 2-D space-time eigenfunctions, and hence cannot be used to mitigate interference on the joint space-time dimensions. 
Therefore, we derive an 
asymmetric 4-dimensional kernel decomposition method that 
combines the following two steps as shown in figure \ref{fig:thm}:
A) A generalization of Mercer's theorem that is applicable to both symmetric or asymmetric kernels, and B) An extension of KLT for high-dimensional kernels. 



\begin{lemma}
\label{lemma:GMT}
(Generalized Mercer's theorem (GMT))
The decomposition of a 2-D process $K {\in} L^2(X {\times} Y)$, where $X$ and $Y$ are square-integrable zero-mean 
processes, 
is given by,
\begin{align}
    K(t, t') = \sum\nolimits_{n=1}^{\infty} \sigma_n \psi_n(t) \phi_n(t')
    \label{eq:GMT}
\end{align}
where $\sigma_n$ is a random variable with $\mathbb{E}\{\sigma_n \sigma_{n'}\} {=} \lambda_n \delta_{nn'}$, and $\lambda_n$ is the $n$\textsuperscript{th} eigenvalue. $\psi_n(t)$ and $\phi_n(t')$ are eigenfunctions.
\end{lemma}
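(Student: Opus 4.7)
My plan is to reduce the decomposition of the asymmetric 2-D kernel $K(t,t')$ to two classical Mercer-type problems on symmetric kernels via an auto-correlation construction, analogous to how the singular value decomposition reduces to two coupled eigenvalue problems. This will produce the two orthonormal families $\{\psi_n\}$ and $\{\phi_n\}$ on the $X$ and $Y$ sides, together with a common spectrum $\{\lambda_n\}$. The stochastic coefficients $\sigma_n$ will then arise as the projection of the (random) kernel onto the tensor basis $\psi_n(t)\phi_n(t')$, and their second-order property will follow directly from the orthonormality of the bases.

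First I would form the two auxiliary kernels
\begin{align*}
K_1(t,t'') &= \int K(t,s)\, K^{*}(t'',s)\, ds, \\
K_2(t',t''') &= \int K^{*}(s,t')\, K(s,t''')\, ds,
\end{align*}
both of which are Hermitian-symmetric and positive semi-definite whenever $K\in L^{2}(X\times Y)$. Classical Mercer's theorem applied separately to $K_1$ and $K_2$ yields orthonormal eigenbases $\{\psi_n(t)\}$ on $X$ and $\{\phi_n(t')\}$ on $Y$ with eigenvalues $\{\lambda_n\}$ and $\{\mu_n\}$. I would then show $\lambda_n=\mu_n$ by noting that $K_1$ and $K_2$ are the kernels of $TT^{*}$ and $T^{*}T$ for the Hilbert--Schmidt integral operator $T$ associated with $K$, so they share identical nonzero spectra. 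This simultaneously delivers the pairing $\phi_n(t') = \lambda_n^{-1/2}\int K^{*}(t,t')\,\psi_n(t)\,dt$ (modulo phase), which is the continuous analogue of the SVD relation $Tv_n=\sigma_n u_n$.

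With the paired bases in place, I would define the projection coefficient
\begin{align*}
\sigma_n \;=\; \iint K(t,t')\, \psi_n^{*}(t)\, \phi_n^{*}(t')\, dt\, dt',
\end{align*}
so that the bilinear expansion $K(t,t')=\sum_n\sigma_n\psi_n(t)\phi_n(t')$ holds in the mean-square sense by completeness of the tensor basis $\{\psi_n\otimes\phi_n\}$ on the joint support of the spectrum. The identity $\mathbb{E}\{\sigma_n\sigma_{n'}^{*}\}=\lambda_n\delta_{nn'}$ then follows by substituting the defining integrals for $\sigma_n$ and $\sigma_{n'}$, interchanging expectation and integration, recognizing the inner double integral as $K_1$ (or equivalently $K_2$), and finally invoking the Mercer expansion of $K_1$ together with the orthonormality of $\{\psi_n\}$ and $\{\phi_n\}$: the off-diagonal terms collapse precisely because the indices on the two sides are matched through the shared spectrum.

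The main obstacle will be handling the asymmetric, possibly non-positive-definite, and stochastic nature of $K$ simultaneously. Classical Mercer requires a symmetric, positive-definite, continuous kernel on a compact domain, so I will need to justify that (i) the auxiliary kernels $K_1,K_2$ lie in the class to which Mercer applies (for which mean-square continuity and the Hilbert--Schmidt assumption $K\in L^{2}$ suffice), (ii) the mode pairing between $\psi_n$ and $\phi_n$ is unique up to a unit-modulus phase even when the spectrum has multiplicities, so that the diagonal second-moment structure is unambiguous, and (iii) the series converges in mean-square for the random kernel rather than merely pointwise. Once these three technical points are secured, the stated decomposition \eqref{eq:GMT} follows immediately from the SVD-style argument outlined above.
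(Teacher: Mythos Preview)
Your Schmidt/SVD reduction via the auxiliary kernels $K_1,K_2$ is the right idea for a \emph{deterministic} asymmetric kernel, but the lemma is stated for a \emph{random} process $K$ with random coefficients $\sigma_n$ and deterministic eigenfunctions. This is where your argument breaks: since $K$ is stochastic, your $K_1(t,t'')=\int K(t,s)K^{*}(t'',s)\,ds$ and $K_2$ are themselves random kernels, so applying Mercer to them produces realization-dependent eigenbases $\{\psi_n^{\omega}\},\{\phi_n^{\omega}\}$. That is a valid per-realization SVD, but it does not deliver the fixed $\psi_n,\phi_n$ the lemma asserts, and with random bases the projection $\sigma_n=\iint K\,\psi_n^{*}\phi_n^{*}$ is just the (nonnegative) singular value of that realization, for which $\mathbb{E}\{\sigma_n\sigma_{n'}\}$ has no reason to vanish off-diagonal.

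The second-moment step is also not as you describe. If $\psi_n,\phi_n$ were deterministic, then after interchanging expectation and integration you obtain
\[
\mathbb{E}\{\sigma_n\sigma_{n'}^{*}\}=\iiiint \mathbb{E}\{K(t,t')K^{*}(s,s')\}\,\psi_n^{*}(t)\phi_n^{*}(t')\psi_{n'}(s)\phi_{n'}(s')\,dt\,dt'\,ds\,ds',
\]
which involves the four-variable covariance $\mathbb{E}\{K(t,t')K^{*}(s,s')\}$, not your $K_1(t,s)=\int K(t,u)K^{*}(s,u)\,du$; the two objects differ precisely by whether one integrates or takes expectation. Diagonality in $n,n'$ therefore requires $\psi_n\otimes\phi_n$ to be eigenfunctions of the covariance operator of $K$, which your construction never touches. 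The paper's proof supplies exactly this missing ingredient by combining Mercer's theorem with the Karhunen--Lo\`eve transform: KLT is applied in one argument so that the eigenfunctions come from a (deterministic) covariance kernel and the resulting coefficient process is uncorrelated, and a Mercer-type step in the remaining argument then separates the second family. Your outline recovers the asymmetric-domain part of that program but omits the KLT/covariance step that handles the stochasticity.
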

\noindent
The proof combines 
Mercer's Theorem with KLT to generalize it to asymmetric kernels and is provided in Appendix B-A in the external document \cite{appendix_link}. 
%
From Lemma~\ref{lemma:GMT}, by letting $\rho(t, t'){=}\psi_n(t) \phi_n(t')$ in \eqref{eq:GMT} we have \eqref{eq:GMT_2D}, 
\begin{equation}
    K(t, t') {=} \sum_{n=1}^{\infty} \sigma_n \rho_n(t, t')
    \label{eq:GMT_2D}
\end{equation}
where the 2-D kernel is decomposed into random variable ${\sigma_n}$ with constituent 
2-D eigenfunctions, $\rho(t, t')$, this serves as an extension of KLT 
to 2-D kernels.
A similar extension leads to the derivation of KLT for N-dimensional kernels which is 
key to deriving Theorem \ref{thm:hogmt}.



\noindent
\fbox{\begin{minipage}{0.97\linewidth}
\begin{theorem}
\label{thm:hogmt}
(High Order GMT (HOGMT)) The decomposition of $M {=} Q{+}P$ dimensional kernel $K {\in} L^M(X {\times} Y)$, where $X(\gamma_1,\cdots,\gamma_Q)$ and $Y(\zeta_1,\cdots,\zeta_P)$ are $Q$ and $P$ dimensional kernels respectively, that are square-integrable zero-mean random processes, is given by \eqref{eq:col},
\begin{align}
\label{eq:col}
K(\zeta_1{,}{...}{,}\zeta_P{;} \gamma_1{,}{...}{,} \gamma_Q) {=} {\sum_{n{=}1}^ \infty} \sigma_n \psi_n(\zeta_1{,}{...}{,}\zeta_P) \phi_n(\gamma_1{,}{...}{,}\gamma_Q)
\end{align}
where $\mathbb{E}\{\sigma_n \sigma_n'\} {=} \lambda_n \delta_{nn'}$. $\lambda_n$ is the $n$\textsuperscript{th} eigenvalue and $\psi_n(\zeta_1,\cdots,\zeta_P)$ and $\phi_n(\gamma_1,\cdots, \gamma_Q)$ are $P$ and $Q$ dimensional eigenfunctions respectively.
\end{theorem}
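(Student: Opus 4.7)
The plan is to reduce Theorem~\ref{thm:hogmt} to the already-established 2-D Generalized Mercer's Theorem (Lemma~\ref{lemma:GMT}) by bundling the $P$-tuple $(\zeta_1,\ldots,\zeta_P)$ and the $Q$-tuple $(\gamma_1,\ldots,\gamma_Q)$ into single vector-valued indices. Concretely, I would set $\vec{\zeta}:=(\zeta_1,\ldots,\zeta_P)\in X$ and $\vec{\gamma}:=(\gamma_1,\ldots,\gamma_Q)\in Y$, where $X$ and $Y$ are now regarded as the respective $P$- and $Q$-fold product measure spaces. Under this regrouping, $K$ becomes a bivariate kernel $\widetilde{K}(\vec{\zeta},\vec{\gamma})$ on $X\times Y$, and the hypotheses of square-integrability and zero mean transfer directly by Fubini's theorem.

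Next I would verify the Hilbert-space prerequisites for Lemma~\ref{lemma:GMT} to apply with $t\leftarrow\vec{\zeta}$ and $t'\leftarrow\vec{\gamma}$. The key structural fact is the tensor-product identification $L^2(X)\cong L^2(X_1)\otimes\cdots\otimes L^2(X_P)$, and analogously for $Y$, which makes $L^2(X)$ and $L^2(Y)$ separable Hilbert spaces. The kernel $\widetilde{K}$ then induces a compact Hilbert--Schmidt integral operator $\mathcal{T}_{\widetilde{K}}:L^2(Y)\to L^2(X)$, and Lemma~\ref{lemma:GMT} (essentially the stochastic Schmidt decomposition of $\mathcal{T}_{\widetilde{K}}$) gives in vector form
\begin{equation*}
\widetilde{K}(\vec{\zeta},\vec{\gamma})=\sum_{n=1}^{\infty}\sigma_n\,\psi_n(\vec{\zeta})\,\phi_n(\vec{\gamma}),\qquad \mathbb{E}\{\sigma_n\sigma_{n'}\}=\lambda_n\delta_{nn'},
\end{equation*}
with $\{\psi_n\}$ and $\{\phi_n\}$ orthonormal in $L^2(X)$ and $L^2(Y)$ respectively. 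Unfolding the vector arguments back into coordinates, each $\psi_n$ is a genuine $P$-dimensional function $\psi_n(\zeta_1,\ldots,\zeta_P)$ and each $\phi_n$ is a genuine $Q$-dimensional function $\phi_n(\gamma_1,\ldots,\gamma_Q)$, which is exactly \eqref{eq:col}.

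The main obstacle is largely conceptual rather than technical: one has to justify that treating a $P$-tuple as a single Hilbert-space index preserves every structural ingredient needed for the spectral decomposition underlying Lemma~\ref{lemma:GMT} --- in particular compactness of $\mathcal{T}_{\widetilde{K}}$, square-summability of the singular values, and $L^2$-convergence of the series in the native product-measure sense. Once the tensor-product isomorphism is in place, the joint orthogonality of $\{\psi_n\}$ across all $P$ coordinates (and of $\{\phi_n\}$ across all $Q$ coordinates), which is precisely the property the paper exploits later to eliminate joint interference across multiple degrees of freedom, is inherited for free from the single-variable orthogonality guaranteed by Lemma~\ref{lemma:GMT}. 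The stochastic orthogonality $\mathbb{E}\{\sigma_n\sigma_{n'}\}=\lambda_n\delta_{nn'}$ likewise carries over from the KLT step embedded in the proof of Lemma~\ref{lemma:GMT}.
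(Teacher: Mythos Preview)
Your proposal is correct and is essentially the same reduction the paper uses: collapse the $P$ ``output'' variables and the $Q$ ``input'' variables into single composite indices, then invoke the 2-D result (Lemma~\ref{lemma:GMT}) on the resulting bivariate kernel. The paper packages this slightly differently---it first recasts Lemma~\ref{lemma:GMT} as a ``2-D KLT'' by writing $\rho_n(t,t')=\psi_n(t)\phi_n(t')$ (equation~\eqref{eq:GMT_2D}) and then iterates that observation to obtain an $N$-dimensional KLT as an intermediate step before assembling HOGMT---whereas you go straight through the tensor-product identification $L^2(X_1\times\cdots\times X_P)\cong L^2(X_1)\otimes\cdots\otimes L^2(X_P)$. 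The two routes are equivalent in content; yours is the cleaner Hilbert-space formulation, while the paper's stepwise KLT extension makes the connection to the later precoding application (where the implementation literally reshapes the 4-D kernel via an invertible map $f:u\times t\to m$) more explicit.
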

\end{minipage}}

The proof is provided in Appendix~B-B in \cite{appendix_link}. 
Theorem~\ref{thm:hogmt} is applicable to any $M$ dimensional channel kernel. Examples of such channel kernels may include 1-D time-varying channels, 2-D time-frequency kernels for doubly dispersive channels \cite{2003MatzDP}, user, antenna dimensions in MU-MIMO channels and angles of arrivals and departures 
in mmWave channels.
\begin{figure}[t]
    \centering
    \includegraphics[width=\linewidth]{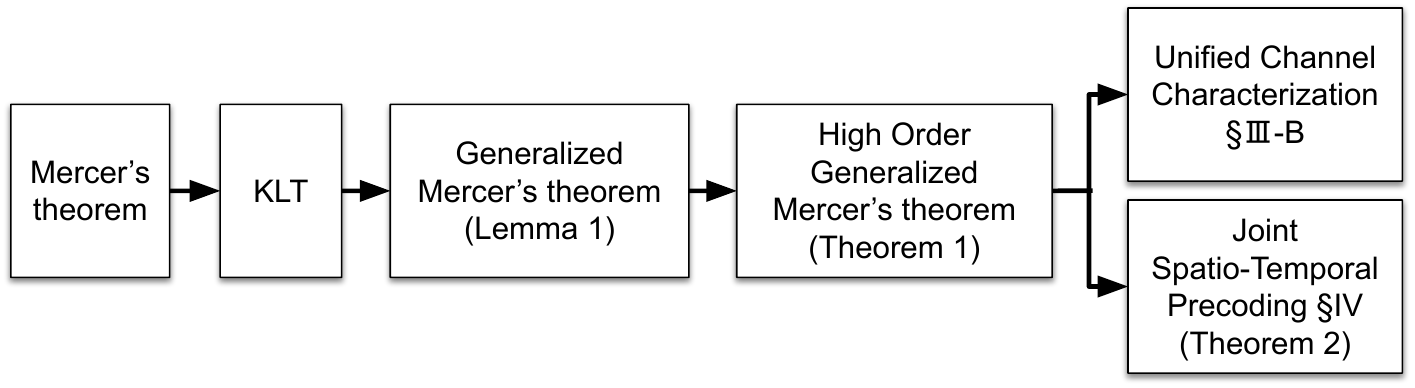}
    \vspace{-10pt}
    \caption{Derivation of High Order Generalized Mercer's Theorem for channel decomposition}
    \vspace{-10pt}
    \label{fig:thm}
\end{figure}
Theorem~\ref{thm:hogmt} ensures that the 4-D channel kernel in \eqref{eq:compute_H} is decomposed as in \eqref{eq:channel_kernel_decomp} into 2-D eigenfunctions that are jointly orthonormal in the time-frequency dimensions as in \eqref{eq:properties}.
\begin{align}
\label{eq:channel_kernel_decomp}
&\mathcal{H}(t, f ; \tau, \nu) = \sum\nolimits_{n{=1}}^\infty \sigma_n \psi_n(t, f) \phi_n(\tau, \nu)\\
\label{eq:properties}
&\begin{aligned}
&& \iint \psi_n(t, f) \psi_{n'}^*(t, f) ~dt ~df {=} \delta_{nn'}  \\
&& \iint \phi_n(\tau, \nu) \phi_{n'}^*(\tau, \nu) ~d\tau ~d\nu {=} \delta_{nn'}
\end{aligned}
\end{align}
Moreover, from \eqref{eq:channel_kernel_decomp} and \eqref{eq:properties} we have that, 
\begin{align}
    \iint \mathcal{H}(t{,}f{;}\tau{,}\nu) \phi_n^*(\tau{,}\nu)~d\tau ~d\nu {=} \sigma_n \psi_n(t{,}f)
    \label{eq:duality}
\end{align}
\eqref{eq:duality} suggests that 
when the eigenfunction, $\phi_n^*(\tau,\nu)$ is transmitted through the channel a different eigenfunction, $\psi_n(t,f)$ is received with $\sigma_n$. 
Therefore, we refer to $\phi_n$ and $\psi_n$ as a pair of \textit{dual} eigenfunctions.
By definition, these eigenfunctions are constituent sub-channels of the channel kernel that only undergo a scaling when transmitted over channel.
Therefore, the dual eigenfunctions are referred to as \textit{flat-fading sub-channels} of $H$.

\subsection{Unified Characterization Using Channel Statistics}
\label{sec:stat}

Wireless channels are fully characterized by their (second order) statistics, which we calculate using the extracted eigenvalues and 2-D eigenfunctions.
The CCF is calculated as the correlations of $\mathcal{H}(t, f ; \tau, \nu)$~\cite{Matz2005NS} and is given by,
\begin{align}
    & \abs{ \mathcal{R}(\Delta t, \Delta f;\Delta \tau, \Delta \nu) } \label{eq:CCF3}\\
    &{=}  \Bigl| {\iiiint} \mathbb{E}\{\mathcal{H}^*(t {\shortminus} {\smallDelta} t{,} f {\shortminus} {\smallDelta} f{;} \tau {\shortminus} {\smallDelta} \tau{,} \nu {\shortminus} {\smallDelta} \nu) 
    \mathcal{H}(t{,}f{;}\tau{,} \nu)\} d t d f d \tau d \nu \Bigr|  \nonumber \\
    &{=} \sum\nolimits_{n{=1}}^\infty \lambda_n \abs{R_{\psi_n}(\Delta t,\Delta f)}  |R_{\phi_n}(\Delta \tau,\Delta \nu)| \label{eq:CCF2}
\end{align}
where \eqref{eq:CCF2} is obtained by substituting \eqref{eq:duality} in \eqref{eq:CCF3}. 
$R_{\psi_n}(\Delta t,\Delta f)$ and $R_{\phi_n}(\Delta \tau,\Delta \nu)$ are the correlations of $\psi_n(t, f)$ and $\phi_n(\tau, \nu)$, respectively. 
The LSF reveals the non-stationarities (in time or frequency) in a wireless channel and is given by the 4-D Fourier transform ($\mathbb{F}^{4}$) of the CCF as, 
%
%
\begin{align}
\label{eq:LSF_eigen}
    & \mathcal{C}_H(t, f; \tau, \nu) {=} \mathbb{F}^{4}\left\{\mathcal{R}({\smallDelta} t{,} {\smallDelta} f{;}{\smallDelta} \tau{,} {\smallDelta} \nu)\right\} \nonumber\\
    & {=} \iiiint \mathcal{R}({\smallDelta} t{,} {\smallDelta} f{;}{\smallDelta} \tau{,} {\smallDelta} \nu)
    \mathrm{e}^{{\shortminus}j 2 \pi(t \Delta \nu{\shortminus}f \Delta \tau{+}\tau \Delta f{\shortminus}\nu \Delta t)} \mathrm{d} t \mathrm{d} f \mathrm{d} \tau \mathrm{d} \nu \nonumber \\ 
    & {=} \sum\nolimits_{n{=1}}^\infty \lambda_n |\psi_n( \tau,\nu)|^2 |\phi_n(t, f)|^2
\end{align}
where 
$|\psi_n( \tau,\nu)|^2$ and $|\phi_n(t, f)|^2$ represent the spectral density of $\psi_n(t,f)$ and $\phi_n(\tau,\nu)$, respectively.
%
Then, the 
\textit{global (or average) scattering function} $\overline{C}_H(\tau, \nu)$ and (local) TF path gain $\rho_H^2 (t,f)$ \cite{Matz2005NS} are calculated in \eqref{eq:GSF} and \eqref{eq:rho},
%
\begin{align}
\label{eq:GSF}
    & \overline{C}_H(\tau, \nu) {=} \mathbb{E}\{|S_H(\tau, \nu)|^2\} = \iint \mathcal{C}_H(t, f; \tau, \nu) ~dt ~df  \\
    & \rho_H^2 (t,f) {=}  \mathbb{E}\{|L_H(t, f)|^2\} = \iint \mathcal{C}_H(t, f; \tau, \nu) ~d\tau ~dv 
\label{eq:rho}
\end{align}
\eqref{eq:GSF} and \eqref{eq:rho} are re-expressed in terms of the spectral density of eigenfunctions by using \eqref{eq:LSF_eigen} and the properties in \eqref{eq:properties},  

\begin{align}
    &\overline{C}_H(\tau, \nu) {=} \mathbb{E}\{|S_H(\tau, \nu)|^2\} {=} \sum\nolimits_{n{=1}}^\infty \lambda_n |\psi_n( \tau,\nu)|^2 \\
    &  \rho_H^2 (t,f) {=} \mathbb{E}\{|L_H(t, f)|^2\} {=} \sum\nolimits_{n{=1}}^\infty \lambda_n |\phi_n(t, f)|^2 
\end{align}

Finally, the \textit{total transmission gain} $\mathcal{E}_H^2$ is obtained by integrating the LSF out with respect to all four variables, 
\begin{align}
\label{TFgain}
    & \mathcal{E}_H^2 = \iiiint \mathcal{C}_H(t, f; \tau, \nu) ~dt ~df ~d \tau ~d\nu = \sum\nolimits_{n{=}1}^ \infty \lambda_n
\end{align}
Therefore, the statistics of the non-stationary channel is completely characterized by its eigenvalues and eigenfunctions 
obtained by the decomposition of $\mathcal{H}(t,f;\tau,\nu)$, which are summarized in Table~\ref{tab:cha}.
\setlength{\textfloatsep}{0.1cm}
\setlength{\tabcolsep}{0.2em}
\begin{table}[h]
\caption{Unified characterization of non-stationary channel}
\renewcommand*{\arraystretch}{1.2}
\centering
\begin{tabular}{|l|l|}
\hline
Statistics & Eigen Characterization \\ \hline
CCF $|\mathcal{R}(\Delta t, \Delta f;\Delta \tau, \Delta \nu)|$             &   ${\sum} \lambda_n |R_{\psi_n}(\Delta t{,}\Delta f)| |R_{\phi_n}(\Delta \tau{,}\Delta \nu)|$                     \\ \hline
LSF  $\mathcal{C}_H(t, f; \tau, \nu)$             &         ${\sum} \lambda_n |\psi_n( \tau,\nu)|^2 |\phi_n(t, f)|^2$               \\ \hline
Global scattering function   $\overline{C}_H(\tau{,} \nu)$               &        ${\sum} \lambda_n |\psi_n( \tau,\nu)|^2 $                \\ \hline
Local TF path gain $\rho_H^2 (t,f)$
    & ${\sum} \lambda_n |\phi_n( t, f)|^2 $  \\ \hline
Total transmission gain     $\mathcal{E}_H^2$                          &     ${\sum} \lambda_n$                   \\ \hline
\end{tabular}
\label{tab:cha}
\vspace{-10pt}
\end{table}
%
\section{Joint Spatio-Temporal Precoding} 
\label{sec:Precoding}

The kernel $\mathcal{H}(t, f ; \tau, \nu)$ in \eqref{eq:compute_H} describes the time-frequency delay-Doppler response of the channel from \eqref{eq:h_relation} and is essential to extract the statistics of the non-stationary channel $H$ for a single user as they depend on the same 4 dimensions.
For precoding, we express the spatio-temporal channel response 
by extending the time-varying response $h(t,\tau)$ to incorporate multiple users, i.e., $h_{u,u'}(t,\tau)$ \cite{2017GhazalNSmodel,almers2007survey}, which denotes the time-varying impulse response between the $u'$\textsuperscript{th} transmit antenna and the $u$\textsuperscript{th} user, where each user has a single antenna. The 4-D spatio-temporal channel estimation is widely investigated \cite{2020CESrivastava,2005CEXiaoli,2008CEMilojevic} . In this paper we assume perfect CSI at transmit. Thus the received signal in \eqref{eq:h_relation} is extended as
\begin{align}
    \label{eq:MU}
    r_u (t) &= \int \sum\nolimits_{u'} h_{u,u'}(t,\tau) s_{u'}(t-\tau) d\tau + v_{u}(t) \nonumber \\
    & = \int \sum\nolimits_{u'}  k_{u,u'}(t,t') s_{u'}(t') dt' + v_{u}(t)
\end{align}
where $v_u(t)$ is the noise, $s_u(t)$ is the data signal and $k_{u,u'} (t, t') {=} h_{u,u'}(t, t{-}t')$ is the channel kernel.
Then, the relationship between the transmitted and received signals is obtained by rewriting \eqref{eq:MU} in its continuous form in \eqref{eq:MU2}.
\begin{align}
    \label{eq:MU2}
    & r(u,t) = \iint k_H(u,t;u',t') s(u',t')~du'~dt' + v(u,t)
\end{align}

Let $x(u,t)$ be the precoded signal, then the corresponding received signal is $Hx(u,t)$. The aim of precoding in this work is to minimize the interference, i.e., to minimize the least square error, $\|s(u,t) {-} Hx(u,t)\|^2$.

%

\begin{lemma}
\label{lemma:bestprojection}
Given a non-stationary channel $H$ with kernel $k_H(u{,}t{;}u'{,}t')$, if each projection in $\{H \varphi_n(u{,}t)\}$ are orthogonal to each other, there exists a precoded signal 
scheme 
$x(u,t)$ that ensures 
interference-free communication at the receiver, 
\begin{align}
    & \|s(u,t) - Hx(u,t)\|^2 = 0 
\label{eq:obj}
\end{align}
where $\varphi_n(u,t)$ is the eigenfunction of $x(u,t)$, obtained by KLT decomposition as $x(u,t) {=} \sum_{n=1}^\infty x_n \varphi_n(u,t)$.
\vspace{-5pt}
\end{lemma}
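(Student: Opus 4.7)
My plan is to reduce the least-squares condition \eqref{eq:obj} to a coefficient-matching problem by exploiting linearity of $H$ and the orthogonality hypothesis on $\{H\varphi_n\}$. First, I substitute the KLT expansion $x(u,t)=\sum_{n=1}^{\infty} x_n \varphi_n(u,t)$ into $Hx$. Because $H$ is the linear integral operator with kernel $k_H(u,t;u',t')$ from \eqref{eq:MU2}, interchanging the sum and integral (a standard Fubini / dominated-convergence step, legitimate since $x \in L^2$) yields
\begin{align}
Hx(u,t) \;=\; \sum_{n=1}^{\infty} x_n\, H\varphi_n(u,t).
\end{align}
The objective therefore becomes $\bigl\| s(u,t) - \sum_n x_n\, H\varphi_n(u,t)\bigr\|^2$, i.e.\ approximation of $s$ in the linear span of the family $\{H\varphi_n\}$.

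Next I would exploit the orthogonality hypothesis. Set $e_n(u,t):=H\varphi_n(u,t)$, so that $\langle e_n, e_{n'}\rangle = \|e_n\|^2 \,\delta_{nn'}$. Expanding the squared norm and completing the square in each coefficient independently gives the classical orthogonal-projection identity
\begin{align}
\Bigl\| s - \sum_n x_n e_n \Bigr\|^2 \;=\; \|s\|^2 \;-\; \sum_n |\alpha_n|^2\,\|e_n\|^2 \;+\; \sum_n |x_n - \alpha_n|^2\,\|e_n\|^2,
\end{align}
where $\alpha_n := \langle s, e_n\rangle / \|e_n\|^2$. The unique minimizer is $x_n = \alpha_n$, and substituting back produces the constructive scheme
\begin{align}
x(u,t) \;=\; \sum_{n=1}^{\infty} \frac{\langle s,\, H\varphi_n\rangle}{\|H\varphi_n\|^2}\, \varphi_n(u,t).
\end{align}

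The remaining step is to drive the residual $\|s\|^2 - \sum_n |\alpha_n|^2\|e_n\|^2$ to zero, and this is the part I expect to be the main obstacle because bare orthogonality of $\{e_n\}$ only yields Bessel's inequality, not Parseval's identity. The cleanest route is to argue completeness of $\{e_n\}$ in the received-signal space by invoking HOGMT (Theorem~\ref{thm:hogmt}) applied to the channel kernel $k_H$: the theorem produces dual orthonormal families $(\varphi_n,\psi_n)$ satisfying $H\varphi_n = \sigma_n \psi_n$ with $\{\psi_n\}$ spanning the range of $H$. Under that constructive choice of $\{\varphi_n\}$ the family $\{H\varphi_n\}$ is automatically orthogonal \emph{and} complete, so Parseval holds and $\|s - Hx\|^2 = 0$ whenever $s$ lies in the range of $H$. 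This is also why the subsequent development builds $\varphi_n$ directly from the channel kernel rather than treating it as an abstract orthogonal basis: orthogonality alone certifies a best approximation, but it is the duality supplied by HOGMT that upgrades the existence claim to an exact identity.
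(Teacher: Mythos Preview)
Your argument is correct and tracks the paper's own route: substitute the KLT expansion into $Hx$, use linearity of the integral operator to write $Hx=\sum_n x_n\,H\varphi_n$, then exploit the assumed orthogonality of $\{H\varphi_n\}$ to reduce \eqref{eq:obj} to an independent coefficient-matching problem whose solution is $x_n=\langle s,H\varphi_n\rangle/\|H\varphi_n\|^2$. This is precisely the mechanism the paper uses, and your closed-form precoder specializes to the Theorem~\ref{thm:thm2} expression once one takes $\varphi_n=\phi_n^*$ so that $H\varphi_n=\sigma_n\psi_n$ and $\|H\varphi_n\|^2=\sigma_n^2$.

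Where you add value beyond the paper's treatment is in flagging that orthogonality of $\{H\varphi_n\}$ alone only guarantees a best $L^2$ approximation (Bessel), not the exact identity $\|s-Hx\|^2=0$ (Parseval), and that the zero residual genuinely requires completeness of $\{H\varphi_n\}$ in the range of $H$. The paper's statement of Lemma~\ref{lemma:bestprojection} leaves this implicit and only supplies the missing ingredient afterward, by instantiating $\{\varphi_n\}$ with the HOGMT family $\{\phi_n\}$ so that $\{H\varphi_n\}=\{\sigma_n\psi_n\}$ spans the output space. Your proof makes this dependence explicit, which is the honest way to close the argument; just be clear that the existence claim in the lemma, read literally, needs the additional hypothesis that $s$ lies in $\overline{\operatorname{span}}\{H\varphi_n\}$, and that HOGMT is what certifies this in the paper's intended application.
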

The proof is provided in Appendix C-A in \cite{appendix_link}.
%
Therefore, precoding using $\{\varphi_n\}{=}\{\phi_n\}$ or $\{\varphi_n\}{=}\{\psi_n\}$) obtained by decomposing the 
channel kernel using Theorem \eqref{thm:hogmt}, (i.e., constructing $x(u,t)$ using $\{\phi_n\}$ or $\{\psi_n\}$ with coefficients $x_n$ using inverse KLT, eventually leads to interference-free communication, as it satisfies \eqref{eq:obj} by ensuring that $\{H \varphi_n(u,t)\}$ are orthogonal using the properties in \eqref{eq:properties}.

\noindent
\fbox{\begin{minipage}{0.97\linewidth}
\begin{theorem}
\label{thm:thm2}
(HOGMT-based precoding) Given a non-stationary channel $H$ with kernel $k_H(u,t;u',t')$, the precoded signal $x(u,t)$ that ensures interference-free communication at the receiver is constructed by inverse KLT as,
\begin{align}
\label{eq:x}
    x(u{,}t) {=} \sum_{n{=}1}^\infty x_n \phi_n^*(u{,}t),\text{where},
    x_n {=} \frac{{\langle} s(u{,}t){,} \psi_n(u{,}t) {\rangle}}{\sigma_n}
\end{align}
where 
$\{ \sigma_n \}$, $\{ \psi_n\}$ and $\{ \phi_n\} $ are obtained by decomposing the kernel $k_H(u,t;u',t')$ using Theorem~\ref{thm:hogmt} as in \eqref{eq:thm2_decomp},
\begin{align}
\label{eq:thm2_decomp}
&k_H(u,t;u',t') = \sum\nolimits_{n{=1}}^\infty  \sigma_n \psi_n(u,t) \phi_n(u',t')
\end{align}
\end{theorem}
\end{minipage}}
The proof is provided in Appendix C-B. 
Although precoding involves a linear combination of $\phi_n^*(u,t)$ with $x_n$ it is a non-linear function ($\mathcal{W}(\cdot)$) with respect to the data signal $s(u,t)$, i.e., $x(u,t) {=} \mathcal{W}(\{\phi_n(u,t)\} ; \{\psi_n(u,t)\}, \{\sigma_n\}, s(u,t))$. \eqref{eq:properties} and \eqref{eq:thm2_decomp} suggest that the 4-D kernel is decomposed into jointly orthogonal sub-channels $\{\psi_n(u,t)\}$ and $\{\phi_n(u',t')\}$.
Therefore, the precoding in Theorem \ref{thm:thm2} can be explained as transmitting the eigenfunctions $\{\phi_n^*(u,t)\}$ after multiplying with specific coefficients $\{x_n\}$.
Consequently, when transmitted through the channel $H$, it transforms $\{\phi_n(u,t)\}$ to its dual eigenfunctions $\{\psi_n(u,t)\}$ with $\{\sigma_n\}$ as in \eqref{eq:them2_duality},
\begin{align}
    \iint k_H(u,t;u',t') \phi_n^*(u, t) ~du ~dt {=} \sigma_n \psi_n(u', t').
    \label{eq:them2_duality}
\end{align}
which means that, $H \phi_n^*(u,t){=}\sigma_n \psi_n(u,t)$, as proved in Appendix C-B in \cite{appendix_link}.
%
Then the data signal $s(u,t)$ is directly reconstructed at the receiver (to the extent of noise $v_u(t)$) as the net effect of precoding and propagation in the channel ensures that from \eqref{eq:MU}, $r(u,t){=}Hx(u,t){+}v_u(t){\rightarrow}s(u,t){+}v_u(t){=}\hat{s}(u,t)$ using Lemma \ref{lemma:bestprojection}, 
where $\hat{s}(u,t)$ is the estimated signal. 
Therefore, the spatio-temporal decomposition of the channel in Theorem \ref{thm:hogmt} allows us to precode the signal such that all interference in the spacial domain, time domain and joint space-time domain are cancelled when transmitted through the channel, leading to a joint spatio-temporal precoding scheme. 
Further, this precoding ensures that the data signal is reconstructed directly at the receiver with an estimation error that 
of $v_u(t)$, 
thereby completely pre-compensating the spatio-temporal fading/ interference in non-stationary channels to the level of AWGN noise.
Therefore, this precoding does not require complementary decoding at receiver, which vastly reducing its hardware and computational complexity compared to state-of-the-art precoding methods like Dirty Paper Coding (DPC) or linear precoding (that require a complementary decoder \cite{vu2006exploiting}).

\begin{figure}[t]
\begin{subfigure}{.225\textwidth}
  \centering
  \includegraphics[width=1\linewidth]{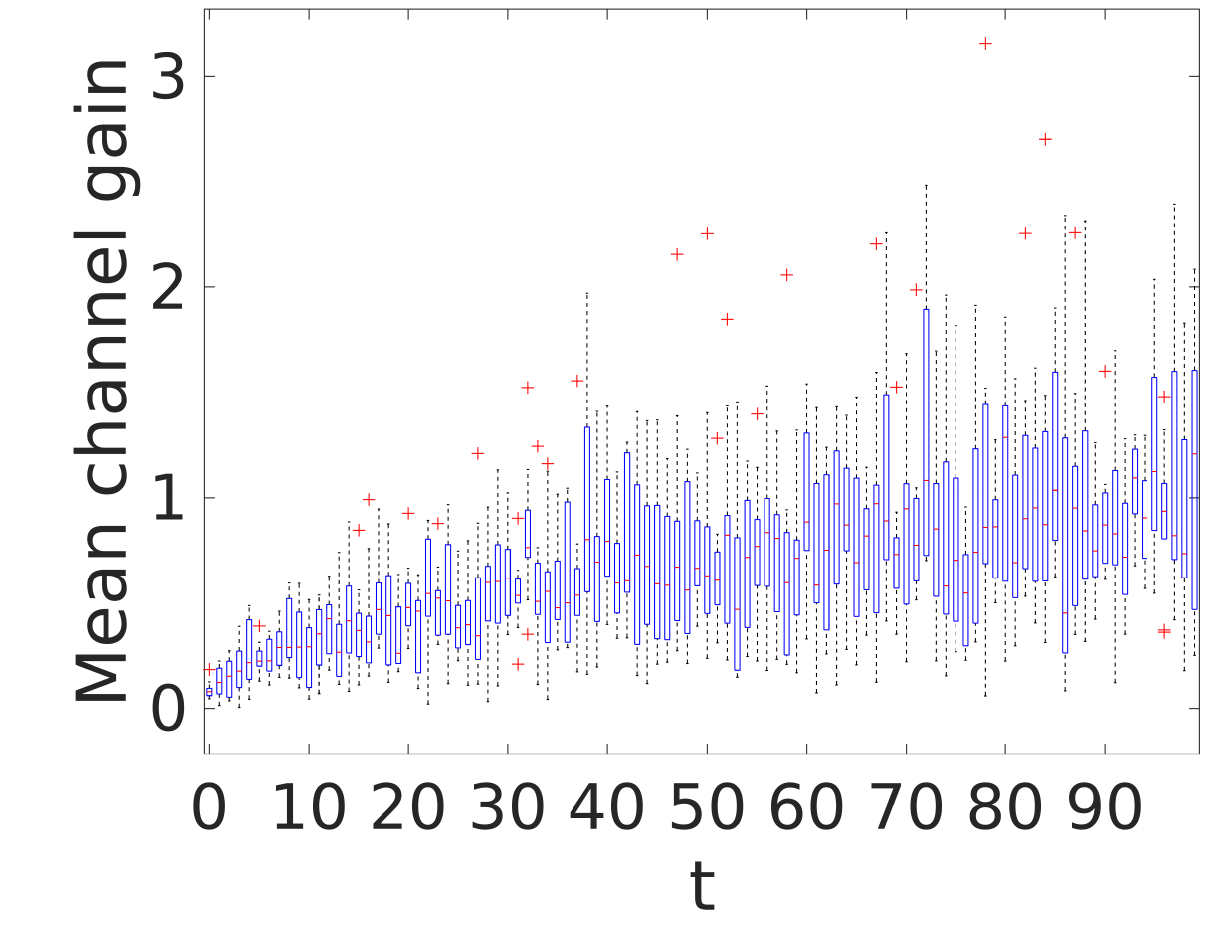}
  \caption{Distribution of mean channel gains at each time instance. 
  } 
  \label{fig:stat}
\end{subfigure}
\begin{subfigure}{.26\textwidth}
  \centering
  \includegraphics[width=1\linewidth]{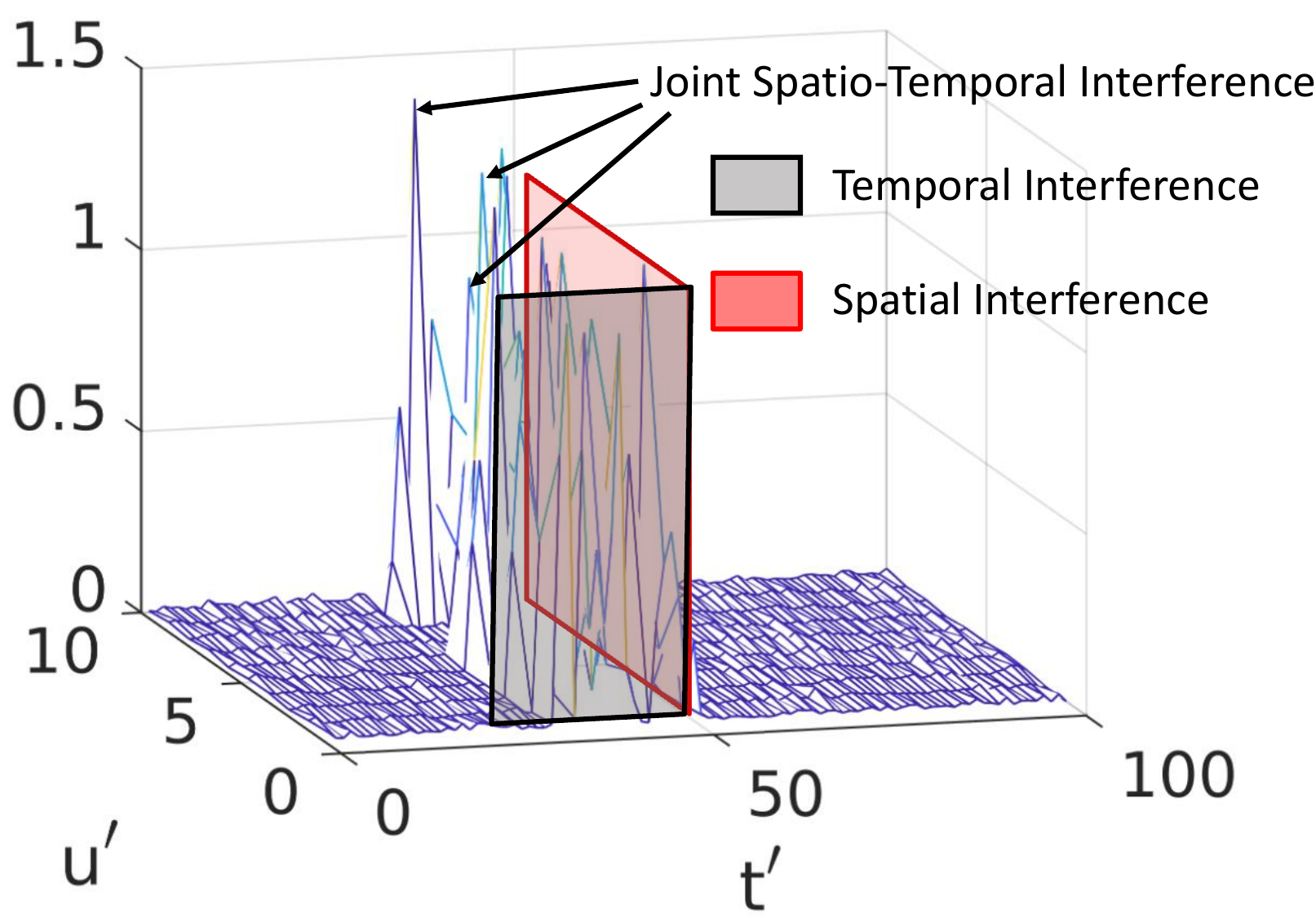}
  \caption{Channel kernel $k_H(u,t;u',t')$ for $u {=} 1$ and $t {=} 50$}
  \label{fig:joint_interference_kernel}
\end{subfigure}
  \caption{Non-stationary channel statistics and kernel}
  \label{fig:ns_channel}
\end{figure}
Theorem~\ref{thm:thm2} does not make any assumptions on the type, dimensions or size of the channel kernel. 
Corollary \ref{col:EP_space} demonstrates the application of Theorem \ref{thm:thm2} to an example of a deterministic multi-user channel where only spatial interference from other users' exist. The received signal is given by $r_u{=}\sum_{u'} h_{u,u'} s_{u'} {+} v_n$ and its continuous form is given by, 
\begin{align}
\label{eq:h_space}
    r(u){=} {\int} k_H (u{,}u') s(u')d u' + v(u)
\end{align}
\begin{corollary}
\label{col:EP_space}
Given a deterministic multi-user channel kernel $k(u,u')$, the precoded signal $x(u)$ that warrants spatial interference-free reception is given by \eqref{eq:xut}. 
\begin{equation}
\label{eq:xut}
    x(u) = \sum_{n{=}1}^\infty \frac{\langle s(u), \psi_n(u) \rangle}{\sigma_n} \phi_n^*(u) 
\end{equation}
where $\{ \sigma_n \}$, $\{ \psi_n\}$ and $\{ \phi_n\} $ follow from Theorem~\ref{thm:hogmt} for the 2-D case, i.e., $k_H(u;u'){=}\sum_{n{=}1}^\infty  \sigma_n \psi_n(u) \phi_n(u')$.
\end{corollary}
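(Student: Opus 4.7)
The plan is to treat Corollary~\ref{col:EP_space} as the degenerate, time-independent specialization of Theorem~\ref{thm:thm2}, in which the 4-D channel kernel $k_H(u,t;u',t')$ collapses to a 2-D spatial kernel $k_H(u,u')$ and the dual eigenfunctions reduce to 1-D functions on the user index. The verification then mirrors the proof of Theorem~\ref{thm:thm2}, but with every time integral suppressed.

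First, I would apply Lemma~\ref{lemma:GMT} (which is exactly Theorem~\ref{thm:hogmt} at $P{=}Q{=}1$) to the deterministic spatial kernel to obtain the expansion $k_H(u,u') = \sum_{n=1}^\infty \sigma_n \psi_n(u)\phi_n(u')$, where $\{\psi_n\}$ and $\{\phi_n\}$ are orthonormal families, i.e., $\int \psi_n(u)\psi_{n'}^*(u)\,du = \delta_{nn'}$ and $\int \phi_n(u')\phi_{n'}^*(u')\,du' = \delta_{nn'}$. The 1-D analogue of the duality identity~\eqref{eq:them2_duality} then follows immediately by integrating against $\phi_n^*$:
\begin{equation}
\int k_H(u,u')\,\phi_n^*(u')\,du' = \sigma_n \psi_n(u),
\end{equation}
which is precisely the statement that $H\phi_n^* = \sigma_n \psi_n$ in this setting.

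Next, I would substitute the candidate precoded signal $x(u') = \sum_{n} \frac{\langle s,\psi_n\rangle}{\sigma_n}\phi_n^*(u')$ into the channel action defined by~\eqref{eq:h_space}. Interchanging the (assumed convergent) sum with the integral and applying the duality identity gives
\begin{equation}
Hx(u) = \sum_{n=1}^\infty \frac{\langle s,\psi_n\rangle}{\sigma_n}\,\sigma_n \psi_n(u) = \sum_{n=1}^\infty \langle s,\psi_n\rangle \psi_n(u).
\end{equation}
Since $\{\psi_n\}$ is a complete orthonormal family on the receive-user space, the standard KLT reconstruction $s(u) = \sum_n \langle s,\psi_n\rangle \psi_n(u)$ holds, and hence $Hx(u) = s(u)$. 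Invoking Lemma~\ref{lemma:bestprojection} (or equivalently the objective~\eqref{eq:obj}) then certifies that $\|s(u) - Hx(u)\|^2 = 0$, so the received signal is $r(u) = s(u) + v(u)$, free of spatial interference.

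The main obstacle, such as it is, is not algebraic but structural: I have to justify that the 1-D restriction of the HOGMT decomposition preserves the completeness of $\{\psi_n\}$ over the signal space in which $s(u)$ lives, so that the final Parseval-style expansion is legitimate. For a deterministic finite-user kernel this is automatic (the decomposition is just the SVD of a finite matrix and the eigenfunctions span the row/column space), but for an infinite-user or continuous $u$ setting one would need either $s$ to lie in the closure of $\mathrm{span}\{\psi_n\}$ or a non-degeneracy assumption on $k_H$ that makes $\{\psi_n\}$ a basis of the ambient $L^2$ space.
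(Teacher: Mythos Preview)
Your proposal is correct and follows essentially the same route the paper takes: Corollary~\ref{col:EP_space} is explicitly framed as the application of Theorem~\ref{thm:thm2} to the 2-D spatial kernel, with the decomposition supplied by Lemma~\ref{lemma:GMT} (HOGMT at $P{=}Q{=}1$) and the same duality-plus-KLT-reconstruction argument you wrote out. Your closing caveat about completeness of $\{\psi_n\}$ is a fair technical footnote but does not depart from the paper's approach.
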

Proof of Corollary \ref{col:EP_space} is provided in Appendix C-C of \cite{appendix_link}.
The precoding in Corollary \ref{col:EP_space} holds for any other 2-D channel kernels like the stationary/non-stationary single-user channel kernel given by $k_H(t,t')$ by replacing $k_H(u,u')$ with $k_H(t,t')$.
\section{Results}

We analyze the accuracy of the proposed unified channel characterization and joint spatio-temporal precoding using a non-stationary channel simulation framework in Matlab.
The simulation environment considers $10$ mobile receivers (users) and $100$ time instances of a non-stationary 4-D kernel $k_H(u,t;u',t')$, where the number of delayed symbols (delay taps) causing interference are uniformly distributed between $[10,20]$ symbols for each user at each time instance. 

The pre-processing of the 4-D channel kernel and 2-D data symbols involves mapping them to a low-dimensional space using an invertible mapping 
$f{:}u{\times}t{\to}m$.
Although Theorem \ref{thm:hogmt} decomposes the channel kernel into infinite 
eigenfunctions, we show that it is sufficient to decompose the channel kernel into a finite number of eigenfunctions and select only those whose eigenvalues are greater than a threshold value $\epsilon^2$ for precoding, i.e., $\{(\phi_n(\cdot),\psi_n(\cdot)): \sigma_n{>}\epsilon\}$.
These eigenfunctions are used to calculate the coefficients for joint spatio temporal precoding, which subsequently construct the precoded signal after inverse KLT and combining the real and imaginary parts. 


\begin{figure}[t]
\begin{subfigure}{.24\textwidth}
  \centering
  \includegraphics[width=1\linewidth]{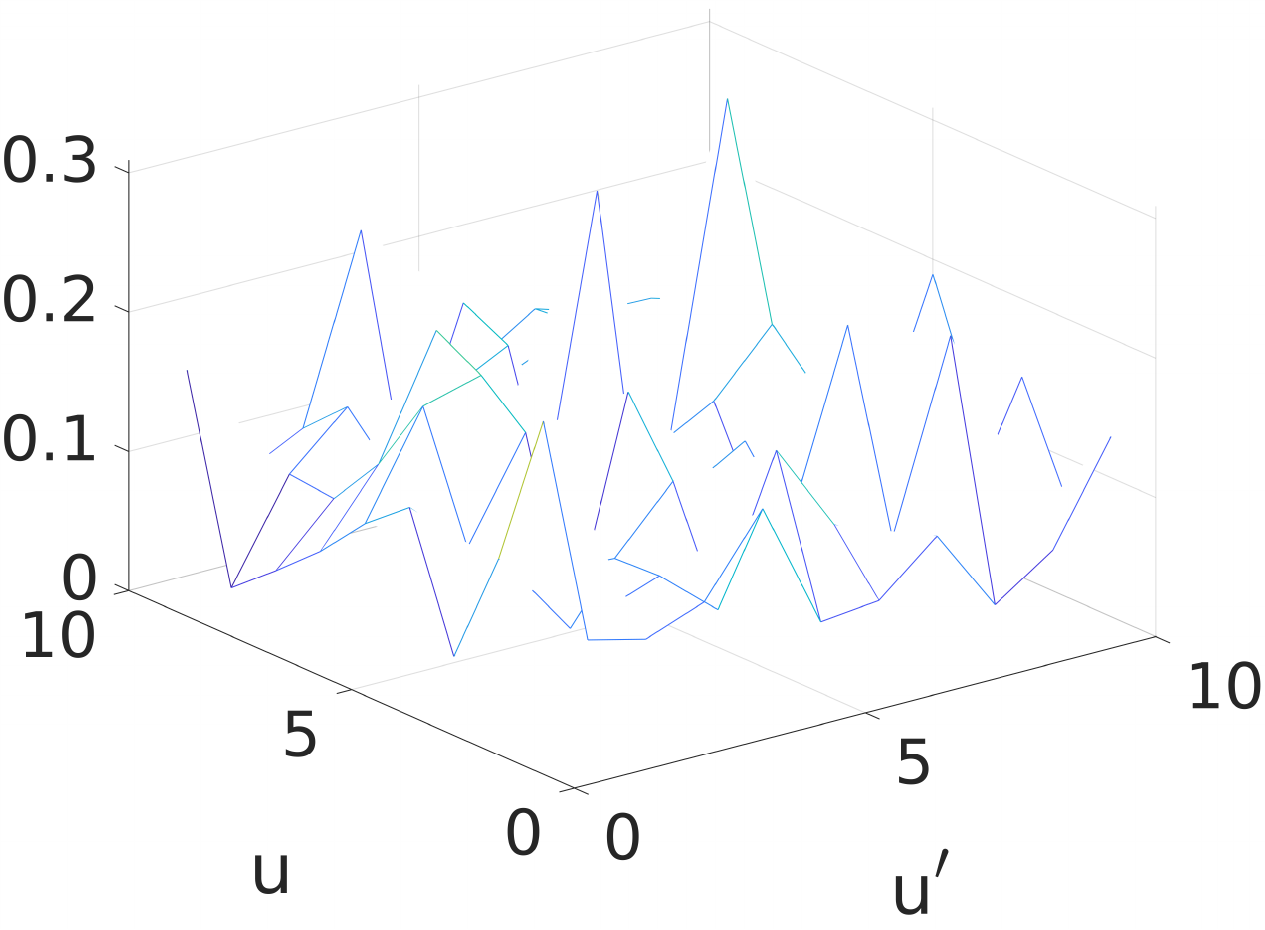}
  \caption{4-D kernel $k_H(u,t;u',t')$ at $t {=} 1$ and $t' {=} 1$ which shows the spatial interference}
  \label{fig:hs1}
\end{subfigure}
\begin{subfigure}{.24\textwidth}
  \centering
  \includegraphics[width=1\linewidth]{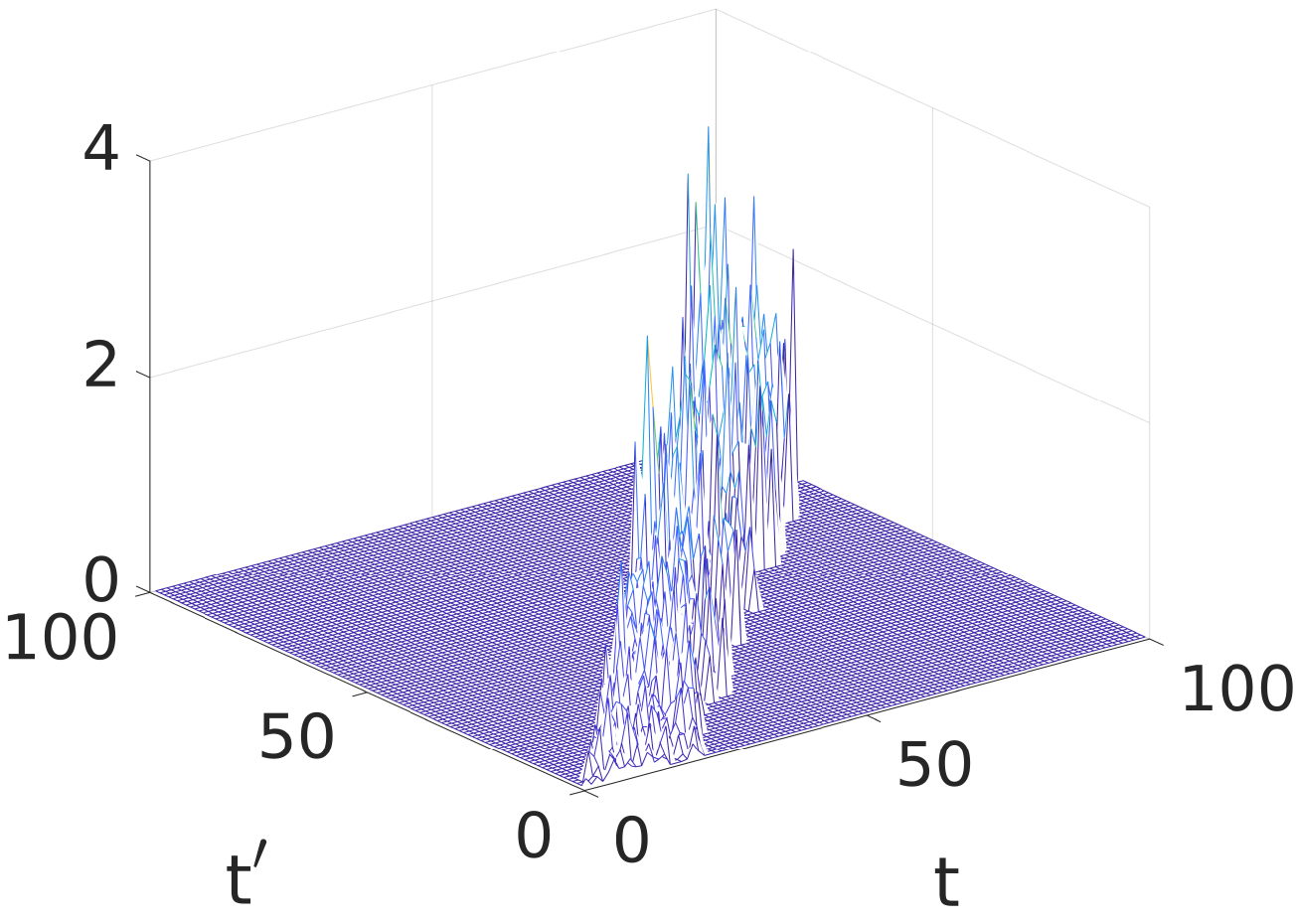}
  \caption{4-D kernel $k_H(u,t;u',t')$ at $u {=} 1$ and $u' {=} 1$ which shows the temporal interference}
  \label{fig:ht1}
\end{subfigure}
  \caption{Separately spatial and delay effects of 4-D kernel}
\end{figure}

%
%
%
%
Figure~\ref{fig:stat} shows distribution of the statistics of the channel gain (mean and variance) for each time instance of the non-stationary channel and further corroborates its non-stationarity. 
%
%
%
%

\begin{figure*}[t]
\centering
\begin{subfigure}{.27\textwidth}
  \centering
  \includegraphics[width=1\linewidth]{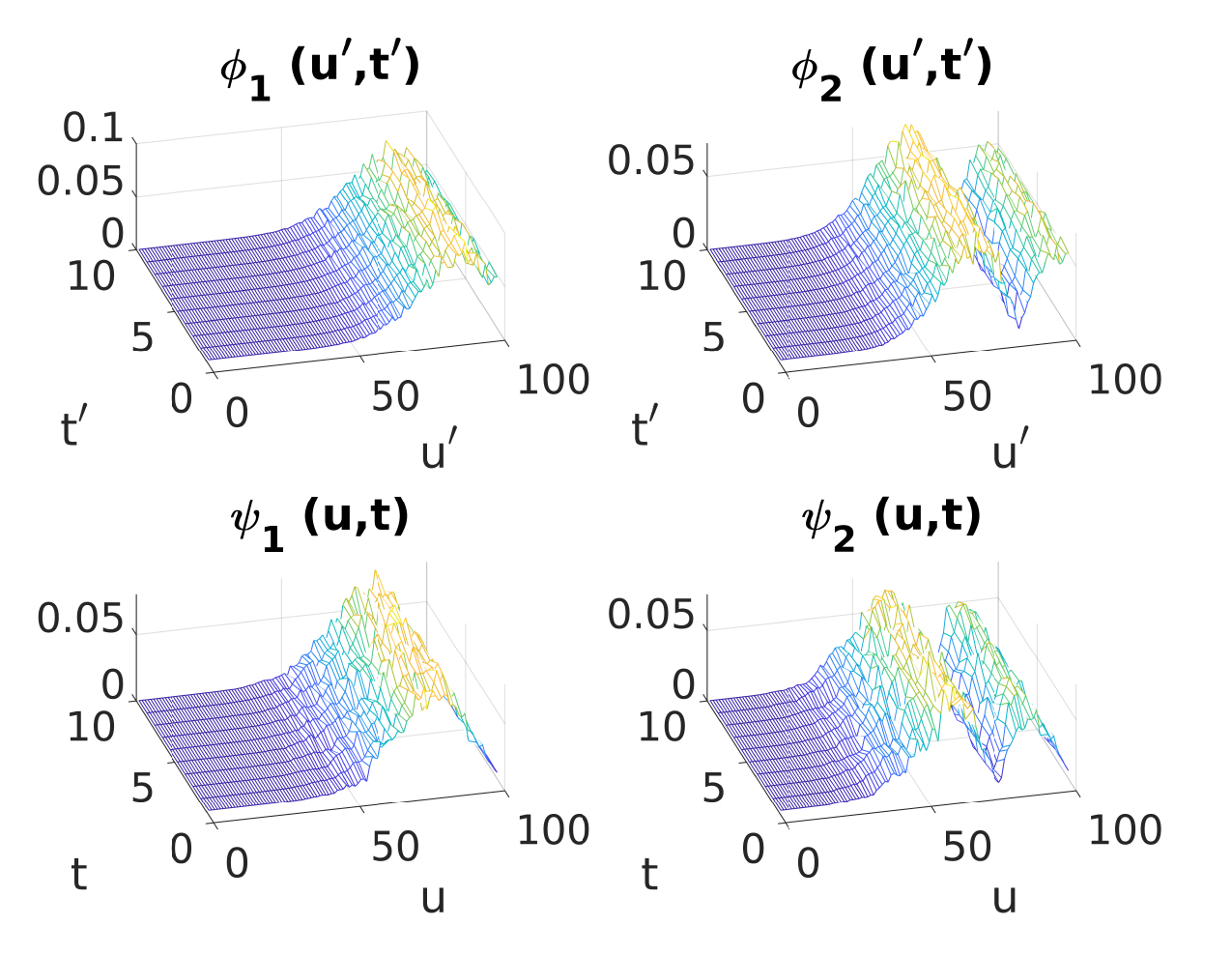}
  \caption{Dual spatio-temporal eigenfunctions decomposed from kernel $k_H(u,t;u',t')$ by HOGMT}
  \label{fig:Eigenfunctions}
\end{subfigure}
\qquad
\begin{subfigure}{.27\textwidth}
  \centering
  \includegraphics[width=1\linewidth]{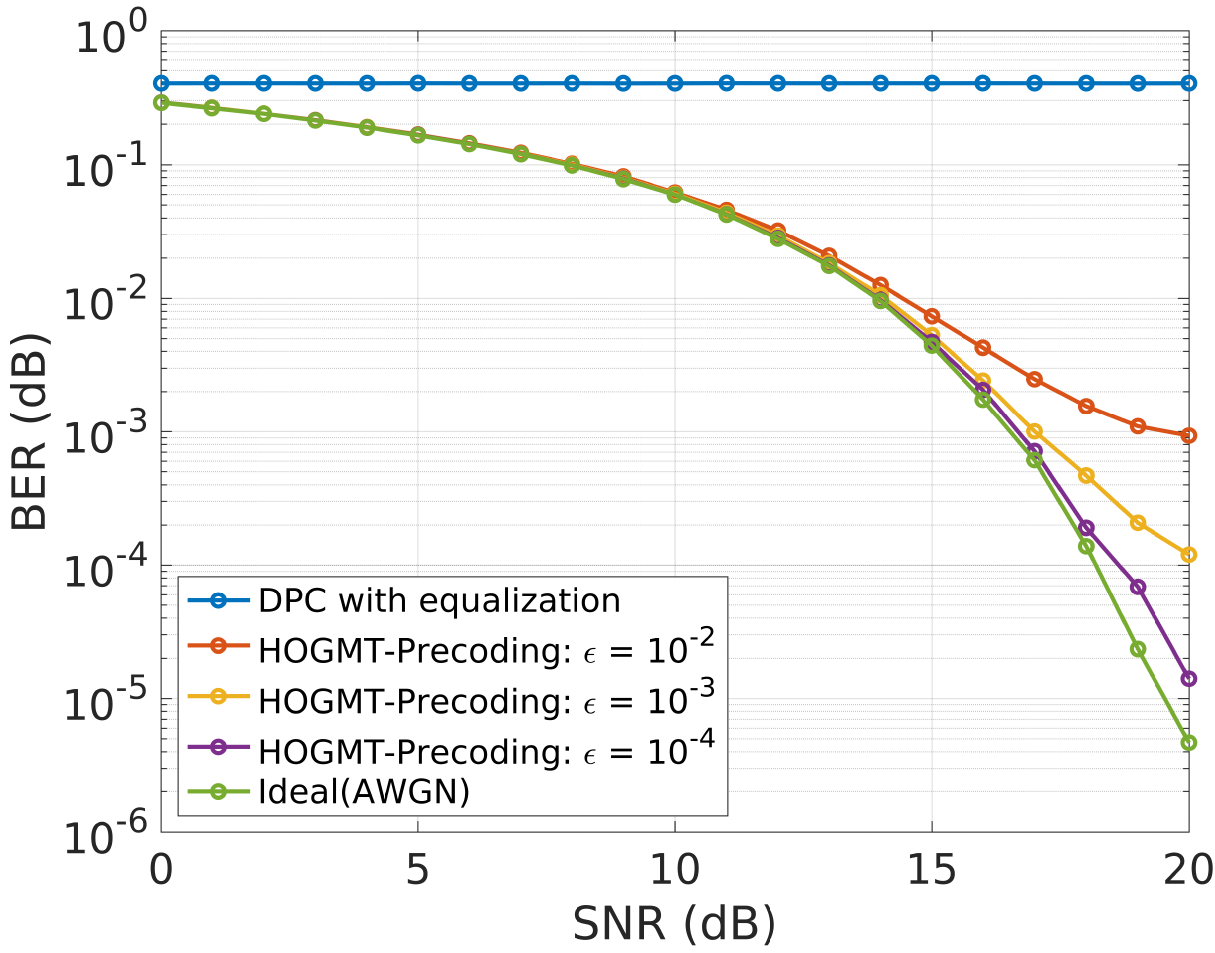}
  \caption{BER of HOGMT based spatial-temporal precoding for different $\epsilon$ and comparison with the state-of-the-art}
  \label{fig:ber_space_time}
\end{subfigure}
\qquad
\begin{subfigure}{.27\textwidth}
  \centering
  \includegraphics[width=1\linewidth]{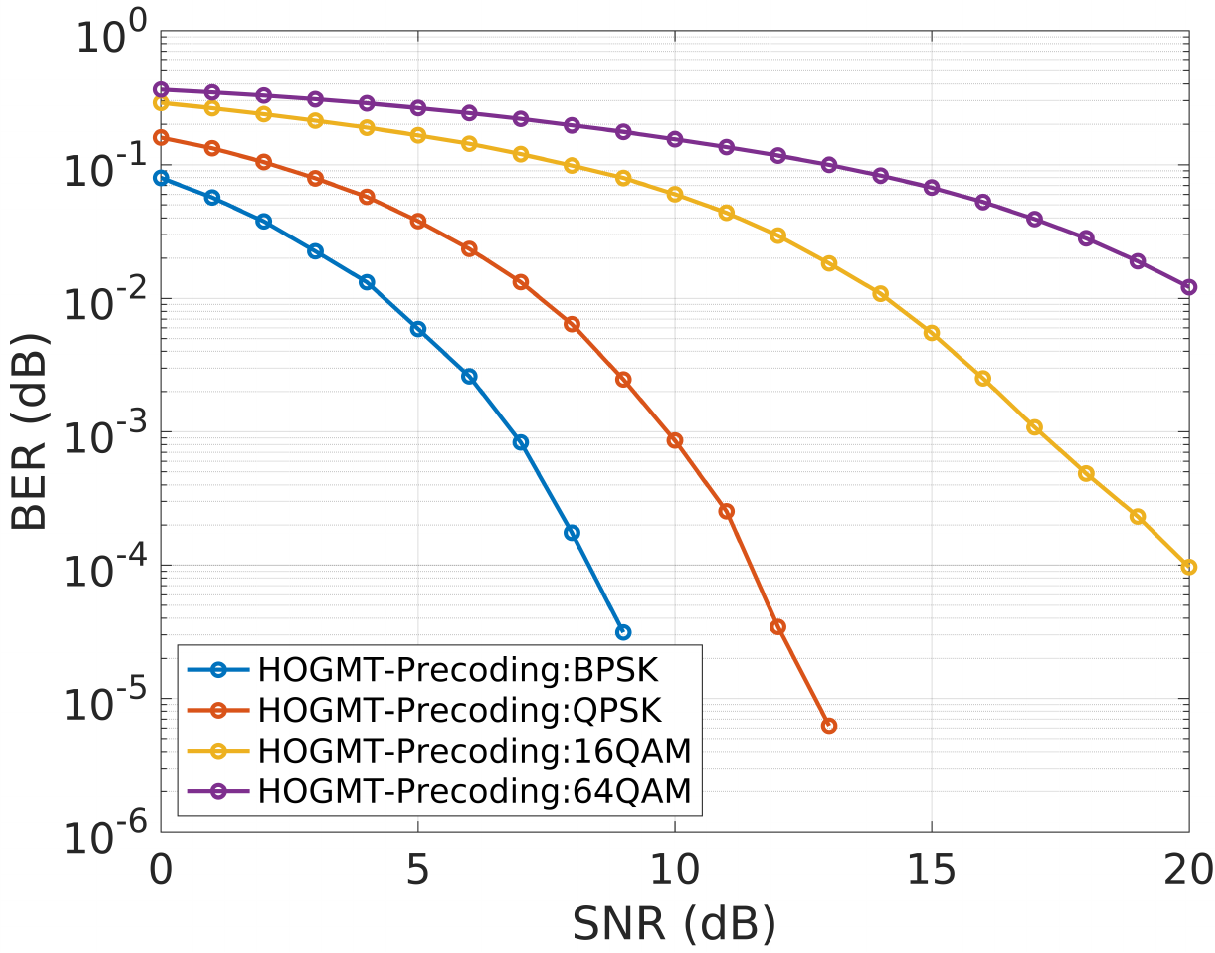}
  \caption{BER of HOGMT based spatio-temporal precoding for BPSK, QPSK ,16-QAM and 64-QAM modulations}
  \label{fig:Ber_space_time_multimod}
\end{subfigure}
  \caption{HOGMT based spatio-temporal precoding}
  \vspace{-20pt}
\end{figure*}
Figure~\ref{fig:joint_interference_kernel} shows the channel response for user $u {=} 1$ at $t{=}50$ which indicates the manifestation of interference. For $u{=}1$, the spatial interference from other users (inter-user interference) occurs on the red plane (at $t'{=50}$), while the temporal interference at $t{=}50$ occurs due to previous delayed symbols (inter-symbol interference) on the grey plane (at $u'{=}1$). However, the received symbols at user $u{=}1$ is also affected by other delayed symbols from other users (i.e., $t'{<50}$ for $u'{\neq}1$), which leads to joint spatio-temporal interference, and necessitates joint precoding over space-time dimensions. 
The varitation of the channel kernels over time and for different users is demonstrated and further explained in Appendix D for completion.
This is the cause of joint space-time interference which necessitates joint precoding in the 2-dimensional space using eigenfunctions that are jointly orthogonal.

Figure~\ref{fig:hs1} shows the spatial (inter-user) interference caused by other users for a fixed time instance (time instances $t{=}1$ and $t'{=}1$) in terms of the 4-D channel kernel, i.e., $k_H(u,1,u',1)$. 
Figure~\ref{fig:ht1} shows the temporal (inter-symbol) interference for a user $u{=}1$ caused by its own (i.e., $u'{=}1$) delayed symbols (\eg due to multipath), i.e., $k_H(1,t;1,t')$. We also observe that the temporal interference for each user occurs from it's own 20 immediately delayed symbols. 
Figure \ref{fig:Eigenfunctions} shows two pairs of dual spatio-temporal eigenfunctions $(\phi_n(u',t'),\psi_n(u,t))$ (absolute values) 
obtained by decomposing $k_H(u,t;u',t')$ in \eqref{eq:thm2_decomp}.
We see that this decomposition is indeed asymmetric as each $\phi_n(u',t')$ and $\psi_n(u',t')$ are not equivalent (but shifted), and that each $\phi_1(u',t')$ and $\psi_1(u,t)$ are jointly orthogonal with $\phi_2(u',t')$ and $\psi_2(u,t)$ as in \eqref{eq:properties}, respectively.
Therefore, when $\phi_1(u',t')$ (or $\phi_2(u',t')$) is transmitted through the channel, the dual eigenfunctions, $\psi_1(u,t)$ (or $\psi_2(u,t)$) is received with $\sigma_1$ and $\sigma_2$, respectively.
Therefore, the non-stationary 4-D channel is decomposed to dual flat-fading sub-channels.
%
%

Figures \ref{fig:ber_space_time} shows the 
BER at the receiver, using joint spatio-temporal precoding (HOGMT-precoding) at the transmitter with 16-QAM modulated symbols for non-stationary channels.
Since this precoding is able to cancel all interference that occurs in space, time and across space-time dimensions which are shown in figure \ref{fig:joint_interference_kernel}, 
it achieves significantly lower BER over existing precoding methods that employ DPC 
at the transmitter to cancel spatial interference and Zero Forcing equalization at the receiver to mitigate temporal interference. 
Further, we show that with sufficient eigenfunctions ($\epsilon{=}10^{-4}$ in this case), proposed method can achieve near ideal BER, only 0.5dB more SNR to achieve the same BER as the ideal case, where the ideal case assumes all interference is cancelled and only AWGN noise remains at the receiver.
This gap exists since practical implementation employs a finite number of eigenfunctions as opposed to an infinite number in \eqref{eq:thm2_decomp}. 
Figure \ref{fig:Ber_space_time_multimod} compares the BER of HOGMT based spatio-temporal precoding for various modulations (BPSK, QPSK, 16-QAM and 64-QAM) for the same non-stationary channel with $\epsilon{=}10^{-3}$.
As expected we observe that the lower the order of the modulation, the lower the BER but at the cost of lower data rate. 
However, we observe that even with high-order modulations (\eg 64-QAM) the proposed precoding achieves low BER (${\approx}10^{-2}$ at SNR${=}20$dB), allowing high data-rates even over challenging non-stationary channels. The choice of the order of the modulation is therefore, based on the desired BER and data rate for different non-stationary scenarios.



Figure \ref{fig:spatial_precoding} shows an example of precoding for an example of a multi-user deterministic channel defined by the 2-D kernel, $K_H(u,u')$ with 30 mobile users as defined in \eqref{eq:h_space}, where only spatial interference from other users exists. 
The spatial interference from other users is portrayed in figure \eqref{fig:Hs} in terms of $K_H(u,u')$. The achieved BER at the receiver by precoding using Corollary \ref{col:EP_space} is shown in figure \ref{fig:ber_space} and is compared with the state-of-the-art DPC for spatial precoding \cite{CostaDPC1983}. 
The performance of HOGMT based spatial precoding for various modulations with $\epsilon{=}10^{-3}$ is compared in figure \ref{fig:Ber_space_multimode}.
\begin{figure*}[t]
\centering
\begin{subfigure}{.27\textwidth}
  \centering
  \includegraphics[width=1\linewidth]{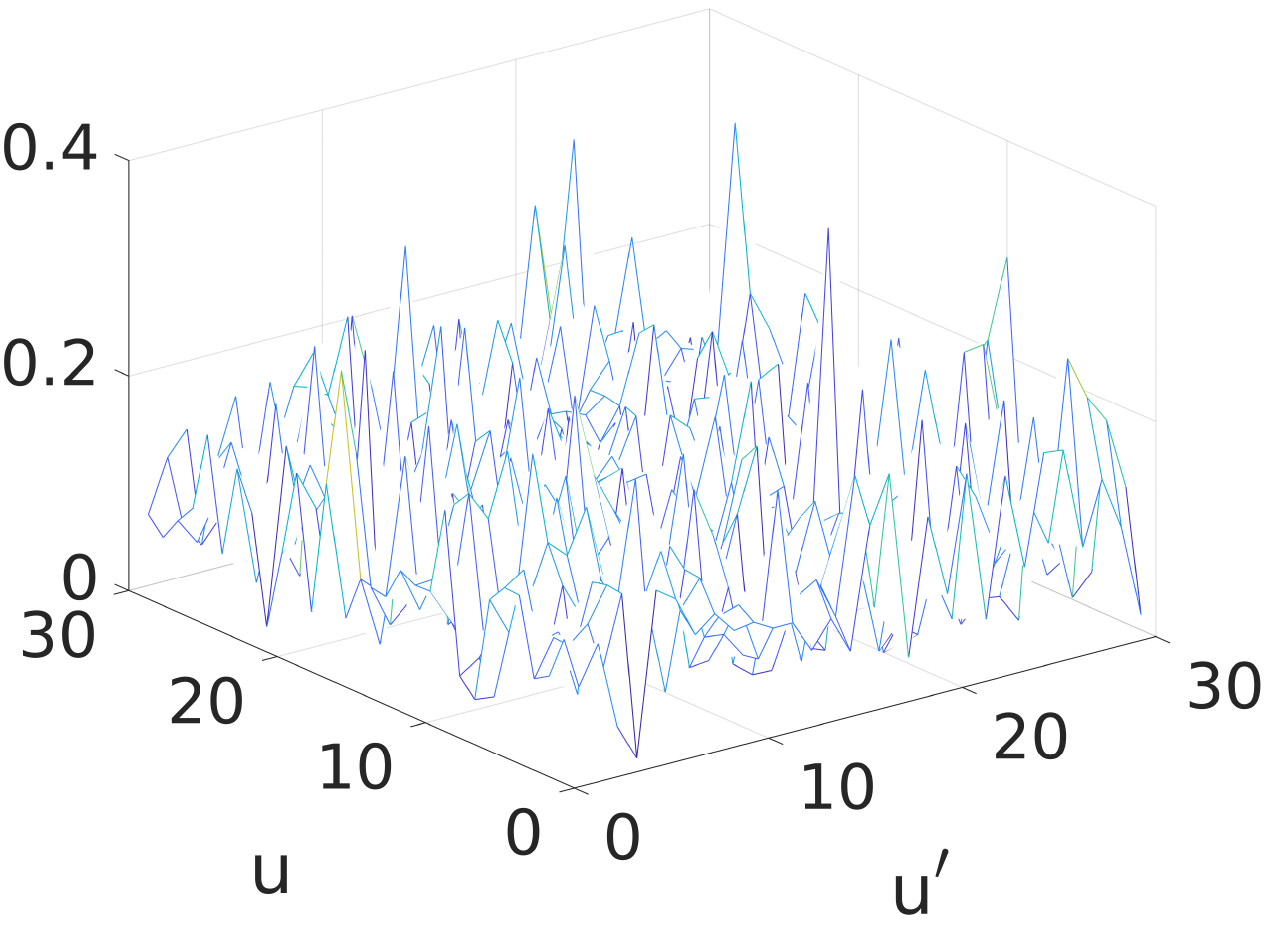}
  \caption{2-D kernel $k_H(u,u')$ where only spatial interference exists}
  \label{fig:Hs}
\end{subfigure}
\qquad
\begin{subfigure}{.27\textwidth}
  \centering
  \includegraphics[width=1\linewidth]{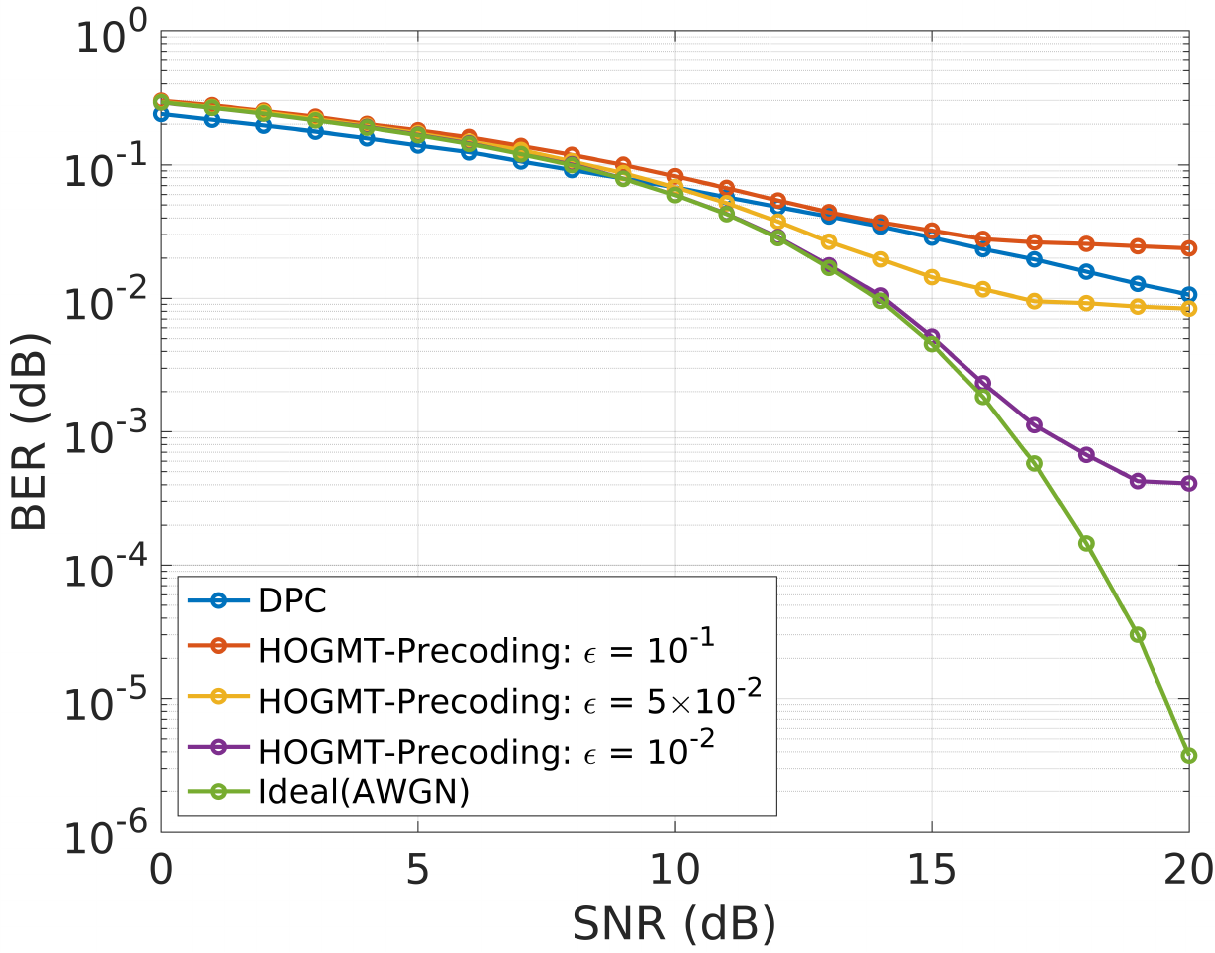}
  \caption{BER of HOGMT based spatial precoding for different $\epsilon$ and comparison with the state-of-the-art}
  \label{fig:ber_space}
\end{subfigure}
\qquad
\begin{subfigure}{.27\textwidth}
  \centering
  \includegraphics[width=1\linewidth]{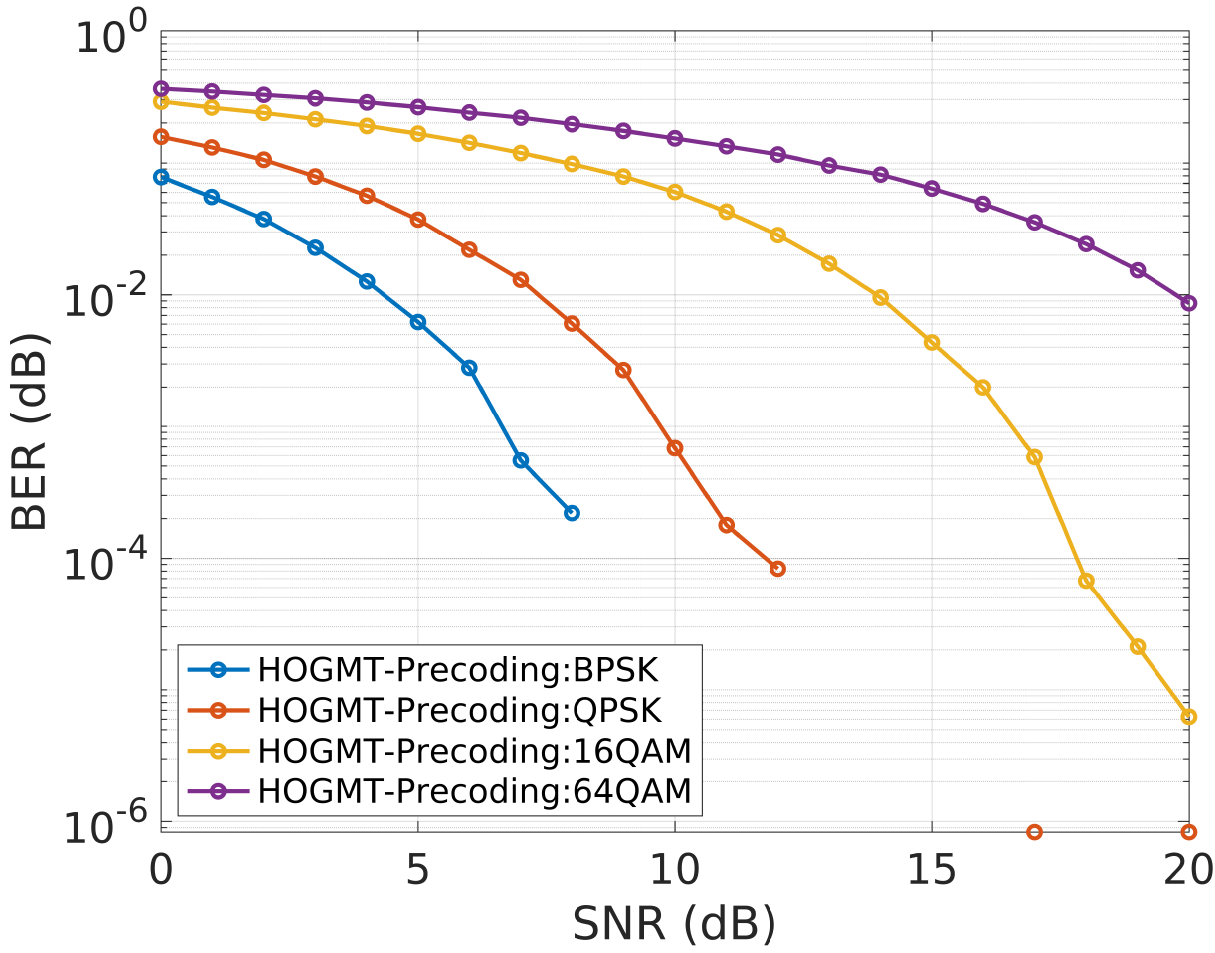}
  \caption{BER of HOGMT based spatial precoding for BPSK, QPSK, 16-QAM and 64-QAM modulations }
  \label{fig:Ber_space_multimode}
\end{subfigure}
  \caption{HOGMT based spatial precoding}
  \label{fig:spatial_precoding}
  \vspace{-10pt}
\end{figure*}
We observe that with low number of eigenfunctions ($\epsilon{=}10^{-1}$), the proposed precoding results in higher BER than DPC because precoding using these limited eigenfunctions is not sufficient to cancel all the spatial interference. 
However, with more eigenfunctions ($\epsilon{=}10^{-2}$), it achieves significantly lower BER compared to DPC for SNR ${>}9$dB and consequently only requires 3dB more SNR to achieve the same BER as the ideal case.
In contrast, while DPC is optimal in the sum rate at the transmitter is not optimal in terms of the BER at the receiver and its BER performance depends on the complementary decoding performed at the receiver.

\section{Conclusion}
\label{sec:conclusion}

In this work, we derived a high-order generalized version of Mercer's Theorem to decompose the general non-stationary channel kernel into 2-dimensional jointly orthogonal flat fading sub-channels (eigenfunctions).
Through theoretical analysis and simulations, we draw three firm conclusions for non-stationary channels: 1) The 2-dimensional eigenfunctions are sufficient to completely derive the second-order statistics of the non-stationary channel and consequently leads to an 
unified characterization of any wireless channel, 2) precoding by combining these eigenfunctions with optimally derived coefficients 
mitigates the spatio-temporal interference, 
and 3) the precoded symbols when propagated over the non-stationary channel directly reconstruct the data symbols at the receiver when combined with calculated coefficients, consequently alleviating the need for complex complementary decoding at the receiver.
Therefore, the encouraging results from this work will form the core of robust and unifed characterization and highly reliable communication over nonstationary channels, supporting emerging application.
\appendices
\label{appendix:precoding}







\section{Related work}
\label{App:related_NLP}


We categorize the related work into three categories:

\noindent
\textbf{Characterization of Non-Stationary Channels:}
Wireless channel characterization in the literature typically require several local and global (in space-time dimensions) higher order statistics to characterize or model non-stationary channels, due to their time-varying statistics. 
These statistics cannot completely characterize the non-stationary channel, however are useful in reporting certain properties that are required for the application of interest such as channel modeling, assessing the degree of stationarity etc.
Contrarily, we leverage the 2-dimensional eigenfunctions that are decomposed from the most generic representation of any wireless channel as a spatio-temporal channel kernel.
These spatio-temporal eigenfunctions can be used to extract any higher order statistics of the channel as demonstrated in Section \rom{3}, and hence serves as a complete characterization of the channel.
Furthermore, since this characterization can also generalize to stationary channels, it is a unified characterization for any wireless channel.
Beyond characterizing the channel, these eigenfunctions are the core of the precoding algorithm.

\noindent
\textbf{Precoding Non-Stationary Channels:}
Although precoding non-stationary channels is unprecedented in the literature \cite{AliNS0219}, we list the most related literature for completeness. 
The challenge in precoding non-stationary channels is the lack of accurate models of the channel and the (occasional) CSI feedback does not fully characterize the non-stationarities in its statistics. This leads to suboptimal performance using state-of-the-art precoding techniques like Dirty Paper Coding which assume that complete and accurate knowledge of the channel is available, while the CSI is often outdated in non-stationary channels. 
While recent literature present  attempt to deal with imperfect CSI by modeling the error in the CSI \cite{HatakawaNLP2012, HasegawaTHP2018, GuoTHP2020, DietrichTHP2007, CastanheiraPGS2013, WangTHP2012, MazroueiTDVP2016, Jacobsson1DAC2017}, they are limited by the assumption the channel or error statistics are stationary or WSSUS at best.
Another class of literature, attempt to deal with the impact of outdated CSI~\cite{AndersonLP2008,Zeng2012LP} in time-varying channels by quantifying this loss or relying statistical CSI. These methods are not directly suitable for non-stationary channels, as the time dependence of the statistics may render the CSI (or its statistics) stale, consequently resulting in precoding error. 

\noindent
\textbf{Space-Temporal Precoding:}
While, precoding has garnered significant research, spatio-temporal interference is typically treated as two separate problems, where spatial precoding at the transmitter aims to cancel inter-user and inter-antenna interference, while equalization at the receiver mitigates inter-carrier and inter-symbol interference.
Alternately, \cite{hadani2018OTFS} proposes to modulate the symbols such that it reduces the cross-symbol interference in the delay-Doppler domain, but requires equalization at the receiver to completely cancel such interference in practical systems. 
Moreover, this approach cannot completely minimize the joint spatio-temporal interference that occurs in non-stationary channels since their statistics depend on the time-frequency domain in addition to the delay-Dopper domain (explained in Section \rom{2}).
While spatio-temporal block coding techniques are studied in the literature \cite{Cho2010MIMObook} they add redundancy and hence incur a communication overhead to mitigate interference, which we avoid by precoding. 
These techniques are capable of independently canceling the interference in each domain, however are incapable of mitigating interference that occurs in the joint spatio-temporal domain in non-stationary channels. 
We design a joint spatio-temporal precoding that leverages the extracted 2-D eigenfunctions from non-stationary channels to mitigate interference that occurs on the joint space-time dimensions, which to the best of our knowledge is unprecedented in the literature.

\section{Proofs on Unified Characterization}
\label{app:characterization}

\subsection{Proof of Lemma 1: Generalized Mercer's Theorem}
\label{App:gmt}

\begin{proof}
Consider a 2-dimensional process $K(t,t') \in L^2(Y \times X)$, where $Y(t)$ and $X(t')$ are square-integrable zero-mean random processes with covariance function $K_{Y}$ and $K_{X}$, respecly. 
The projection of $K(t, t')$ onto $X(t')$ is obtained as in \eqref{eq:projection},
\begin{align}
\label{eq:projection}
    & C(t) = \int K(t, t') X(t') ~dt'
\end{align}

Using \textit{Karhunen–Loève Transform} (KLT), $X(t')$ and $C(t)$ are both decomposed as in \eqref{eq:X_t} and \eqref{eq:C_t}, 
\begin{align}
    &X(t') = \sum_{i = 1}^{\infty} x_{i} \phi_{i}(t') \label{eq:X_t}\\
    &C(t) = \sum_{j = 1}^{\infty} c_{j} \psi_{j}(t) \label{eq:C_t}
\end{align}
where $x_i$ and $c_j$ are both random variables with $\mathbb{E}\{x_i x_{i'}\} {=} \lambda_{x_i} \delta_{ii'}$ and $\mathbb{E}\{c_j c_{j'}\} {=} \lambda_{c_j} \delta_{jj'}$.  $\{\lambda_{x_i}\}$, $\{\lambda_{x_j}\}$ $\{\phi_i(t')\}$ and $\{\psi_j(t)\}$ are eigenvalues and eigenfuncions, respectively.
%

Let us denote $n{=}i{=}j$ and $\sigma_n {=} \frac{c_n}{x_n}$, and assume that $K(t,t')$ can be expressed as in \eqref{eq:thm_K_t},
\begin{align}
\label{eq:thm_K_t}
    K(t,t') = \sum_n^\infty \sigma_n \psi_{n}(t) \phi_{n}(t')
\end{align}
We show that \eqref{eq:thm_K_t} is a correct representation of $K(t,t')$ by proving \eqref{eq:projection} holds under this definition. 
We observe that by substituting \eqref{eq:X_t} and \eqref{eq:thm_K_t} into the right hand side of \eqref{eq:projection} we have that,
\begin{align}
    & \int K(t, t') X(t') ~dt' \nonumber \\
    & = \int \sum_n^\infty \sigma_n \psi_{n}(t) \phi_{n}(t') \sum_{n}^{\infty} x_{n} \phi_{n}(t') ~dt' \nonumber \\
    & = \int \sum_n^\infty \sigma_n x_n \psi_n(t) |\phi_n(t')|^2 \nonumber\\
    & + \sum_{n'\neq n}^ \infty \sigma_{n} x_{n'} \psi_{n}(t) \phi_{n}(t') \phi_{n'}^*(t') ~d t' \nonumber \\
    & = \sum_n^\infty c_n \psi_n(t) = C(t)
\end{align}
which is equal to the left hand side of \eqref{eq:projection}. 
Therefore, \eqref{eq:thm_K_t} is a correct representation of $K(t,t')$.

\end{proof}

\subsection{Proof of Theorem 1: High Order Generalized Mercer's Theorem (HOGMT}
\label{app:hogmt}

\begin{proof}
Given a 2-D process $X(\gamma_1, \gamma_2)$, the eigen-decomposition using Lemma 1 is given by,
\begin{equation}
\label{eq:thm1_1}
    X(\gamma_1, \gamma_2) = \sum_{n}^{\infty} x_{n} e_n(\gamma_1) s_n(\gamma_2)
\end{equation}

Letting $\psi_n(\gamma_1,\gamma_2) {=} e_n(\gamma_1) s_n(\gamma_2)$, and substituting it in \eqref{eq:thm1_1} we have that,

\begin{equation}
\label{eq:2d_klt}
    X(\gamma_1, \gamma_2) = \sum_{n}^{\infty} x_{n} \phi_n(\gamma_1,\gamma_2)
\end{equation}
where $\phi_n(\gamma_1,\gamma_2)$ are 2-D eigenfunctions with the property \eqref{eq:prop1}.
\begin{equation}
\label{eq:prop1}
\iint \phi_n(\gamma_1,\gamma_2) \phi_{n'}(\gamma_1,\gamma_2) ~d\gamma_1 ~d\gamma_2 = \delta_{nn'} 
\end{equation}

We observe that \eqref{eq:2d_klt} is the 2-D form of KLT. With iterations of the above steps, we obtain \textit{Higher-Order KLT} for $X(\gamma_1,\cdots,\gamma_Q)$ and $C(\zeta_1,\cdots,\zeta_P)$ as given by,
\begin{align}
   & X(\gamma_1,\cdots,\gamma_Q) = \sum_{n}^{\infty} x_{n} \phi_n(\gamma_1,\cdots,\gamma_Q) \\
   & C(\zeta_1,\cdots,\zeta_P) = \sum_{n}^{\infty} c_{n} \psi_n(\zeta_1,\cdots,\zeta_P)
\end{align}
where $C(\zeta_1,\cdots,\zeta_P)$ is the projection of $X(\gamma_1,\cdots,\gamma_Q)$ onto $K(\zeta_1,\cdots,\zeta_P; \gamma_1,\cdots, \gamma_Q)$.

Then following similar steps as in Appendix~\ref{App:gmt} we get \eqref{eq:col}. 
\begin{align}
\label{eq:col}
& K(\zeta_1,\cdots,\zeta_P; \gamma_1,\cdots, \gamma_Q) \nonumber \\
& = \sum_{n}^ \infty \sigma_n \psi_n(\zeta_1,\cdots,\zeta_P) \phi_n(\gamma_1,\cdots, \gamma_Q)
\end{align}
\end{proof}

\section{Proofs on Eigenfunction based Precoding}
\label{app:precoding}

\subsection{Proof of Lemma 2}
\label{app:lem1}

\begin{proof}
Using 2-D KLT as in (13), $x(u,t)$ is expressed as,
\begin{equation}
    x(u,t) = \sum_{n}^ \infty x_n \phi_n(u,t)
\end{equation}
where $x_n$ is a random variable with $E\{x_n x_{n'}\}{=} \lambda_n \sigma_{nn'} $ and $\phi_n(u,t)$ is a 2-D eigenfunction. 

Then the projection of $k_H(u,t;u',t')$ onto $\phi_n(u',t')$ is denoted by $ c_n(u,t)$ and is given by,
\begin{equation}
    c_n(u,t) =  \iint k_H(u,t;u',t') \phi_n(u',t') ~du' ~dt'
\end{equation}

Using the above, (28) is expressed as,
\begin{align}
\label{eq:obj_trans}
     & ||s(u,t) - Hx(u,t)||^2 = ||s(u,t) - \sum_n^ \infty x_n c_n(u,t)||^2
\end{align}

Let $\epsilon (x) {=} ||s(u,t) - \sum_n^ \infty x_n \phi_n(u,t)||^2$. Then its expansion is given by,
\begin{align}
\label{eq:ep}
     & \epsilon (x) = \langle s(u,t),s(u,t) \rangle - 2\sum_n ^ \infty  x_n \langle c_n(u,t),s(u,t) \rangle \\
     & + \sum_n^ \infty x_n^2 \langle c_n(u,t), c_n(u,t) \rangle \nonumber + \sum_n^ \infty \sum_{n' \neq n}^ \infty x_n x_{n'}  \langle c_n(u,t), c_{n'}(u,t) \rangle
\end{align}

Then the solution to achieve minimal $\epsilon(x)$ is obtained by solving for $\pdv{\epsilon(x)}{x_n} = 0$ as in \eqref{eq:solution}.
\begin{align}
\label{eq:solution}
    x_n^{opt} & {=}  \frac{\langle s(u,t), c_n(u,t) \rangle + \sum_{n'\neq n}^ \infty x_{n'} \langle c_{n'}(u,t), c_n(u,t) \rangle }{\langle c_n(u,t), c_n(u,t) \rangle}
\end{align}
where $\langle a(u,t), b(u,t) \rangle {=} \iint a(u,t) b^*(u,t) ~du ~dt$ denotes the inner product. 
Let $\langle c_{n'}(u,t), c_n(u,t) \rangle = 0$, i.e., the projections $\{ c_n(u,t)\}_n$ are orthogonal basis. Then we have a closed form expression for $x^{opt}$ as in \eqref{eq:x_opt}.
\begin{align}
\label{eq:x_opt}
    x_n^{opt} & {=}  \frac{\langle s(u,t), c_n(u,t) \rangle}{\langle c_n(u,t), c_n(u,t) \rangle}
\end{align}

Substitute \eqref{eq:x_opt} in \eqref{eq:ep}, it is straightforward to show that $\epsilon(x){=} 0$.
\end{proof}

\subsection{Proof of Theorem 2: Eigenfunction Precoding}
\label{app:thm_2}
\begin{proof}
The 4-D kernel $k_H(u,t;u',t')$ is decomposed into two separate sets of eigenfunction $\{\phi_n(u',t')\}$ and $\{\psi_n(u, t) \}$ using Theorem 1 as in (30). By transmitting the conjugate of the eigenfunctions, $\phi_n(u',t')$ through the channel $H$, we have that,  
\begin{align}
   & H \phi_n^*(u',t') = \iint k_H(u,t;u',t') \phi_n^*(u',t') ~du' ~d t' \nonumber \\ 
   & {=} \iint \sum_{n}^ \infty \{\sigma_n \psi_n(u,t) \phi_n(u',t')\} \phi_n^*(u',t') ~d t' ~d f' \nonumber \\ 
   & {=} \iint \sigma_n \psi_n(u,t) |\phi_n(u',t')|^2 \nonumber\\
   & + \sum_{n'\neq n}^ \infty \sigma_{n'} \psi_{n'}(u,t) \phi_{n'}(u',t')\ \phi_n^*(u',t') ~du' ~d t' \nonumber \\
   & {=} \sigma_n \psi_n(u,t)
\end{align}
where $\psi_n(u,t)$ is also a 2-D eigenfunction with the orthogonal property as in (31). 

From Lemma 2, if the set of projections, $\{c_n(u,t)\}$ is the set of eigenfunctions, $\{\psi_n(u,t)\}$, which has the above orthogonal property, we achieve the optimal solution as in \eqref{eq:x_opt}. Therefore, let $x(u,t)$ be the linear combination of $\{\phi_n^*(u,t)\}$ with coefficients $\{x_n\}$ as in \eqref{eq:construct},

\begin{equation}
\label{eq:construct}
    x(u,t) = \sum_n^ \infty x_n \phi_n^*(u,t) 
\end{equation}

Then \eqref{eq:obj_trans} is rewritten as in \eqref{eq:obj_trans2},
\begin{align}
\label{eq:obj_trans2}
     & ||s(u,t) - Hx(u,t)||^2 = ||s(u,t) - \sum_n^ \infty x_n \sigma_n \psi(u,t)||^2
\end{align}
   
Therefore, optimal $x_n$ in \eqref{eq:x_opt} is obtained as in \eqref{eq:opt},

\begin{equation}
\label{eq:opt}
    x_n^{opt} = \frac{\langle s(u,t), \psi_n(u,t) \rangle}{\sigma_n} 
\end{equation}

Substituting \eqref{eq:opt} in \eqref{eq:construct}, the transmit signal is given by \eqref{eq:x_opt2},
\begin{equation}
\label{eq:x_opt2}
    x(u,t) = \sum_n^ \infty \frac{\langle s(u,t), \psi_n(u,t) \rangle}{\sigma_n} \phi_n^*(u,t). 
\end{equation}
\end{proof}

\subsection{Proof of Corollary 1}
\label{app:EP_space}
\begin{proof}
First we substitute the 4-D kernel $k_H(u,t;u',t')$ with the 2-D kernel $k_H(u,u')$ in Theorem 2 which is then decomposed by the 2-D HOGMT. Then following similar steps as in Appendix~\ref{app:thm_2} it is straightforward to show (34).
\end{proof}

\section{Results on Interference}
\label{App:results_interference}
\begin{figure}[h]
  \centering
  \includegraphics[width=1\linewidth]{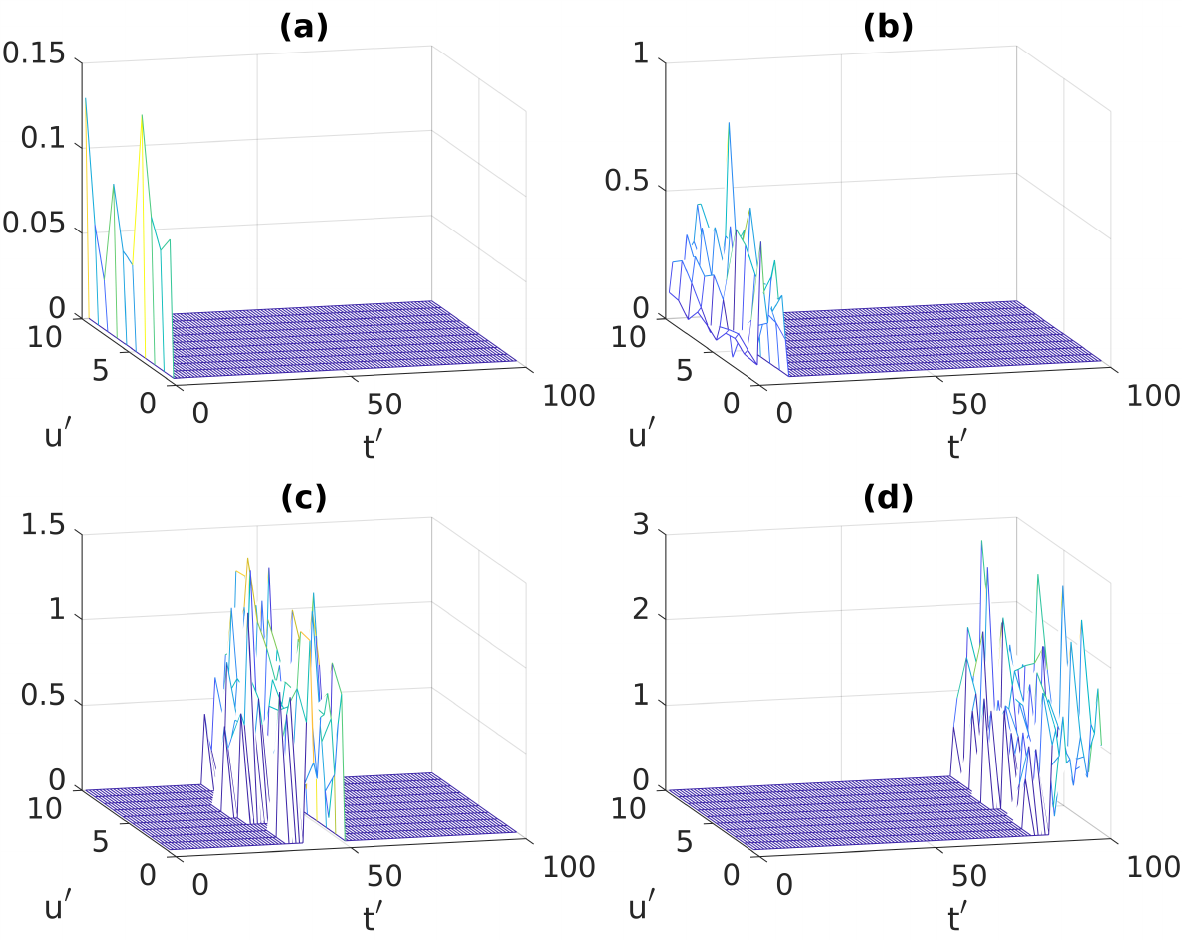}
  \caption{Kernel $k_H(u,t;u',t')$ for $u {=} 1$ at a) $t {=} 1$, b) $t {=} 10 $, c) $t {=} 50$ and d) $t {=} 100$.}
  \label{fig:hst_1_10_50_100}
\label{fig:hst_1_10_50_100}
\end{figure}
Figure~\ref{fig:hst_1_10_50_100} shows the channel response for user $u {=} 1$ at $t{=}1$, $t{=}10$, $t{=}50$ and $t{=}100$, where at each instance, the response for user $u {=} 1$ is not only affected by its own delay and other users' spatial interference, but also affected by other users' delayed symbols. 
This is the cause of joint space-time interference which necessitates joint precoding in the 2-dimensional space using eigenfunctions that are jointly orthogonal.

\bibliographystyle{IEEEtran}
\bibliography{references}

\begin{thebibliography}{10}
\providecommand{\url}[1]{#1}
\csname url@samestyle\endcsname
\providecommand{\newblock}{\relax}
\providecommand{\bibinfo}[2]{#2}
\providecommand{\BIBentrySTDinterwordspacing}{\spaceskip=0pt\relax}
\providecommand{\BIBentryALTinterwordstretchfactor}{4}
\providecommand{\BIBentryALTinterwordspacing}{\spaceskip=\fontdimen2\font plus
\BIBentryALTinterwordstretchfactor\fontdimen3\font minus
  \fontdimen4\font\relax}
\providecommand{\BIBforeignlanguage}[2]{{%
\expandafter\ifx\csname l@#1\endcsname\relax
\typeout{** WARNING: IEEEtran.bst: No hyphenation pattern has been}%
\typeout{** loaded for the language `#1'. Using the pattern for}%
\typeout{** the default language instead.}%
\else
\language=\csname l@#1\endcsname
\fi
#2}}
\providecommand{\BIBdecl}{\relax}
\BIBdecl

\bibitem{Cho2010MIMObook}
Y.~S. Cho, J.~Kim, W.~Y. Yang, and C.~G. Kang, \emph{MIMO-OFDM Wireless
  Communications with MATLAB}.\hskip 1em plus 0.5em minus 0.4em\relax Wiley
  Publishing, 2010.

\bibitem{fatema2017massive}
N.~Fatema, G.~Hua, Y.~Xiang, D.~Peng, and I.~Natgunanathan, ``Massive mimo
  linear precoding: A survey,'' \emph{IEEE systems journal}, vol.~12, no.~4,
  pp. 3920--3931, 2017.

\bibitem{CostaDPC1983}
M.~Costa, ``Writing on dirty paper (corresp.),'' \emph{IEEE Transactions on
  Information Theory}, vol.~29, no.~3, pp. 439--441, 1983.

\bibitem{wang2018survey}
C.-X. Wang, J.~Bian, J.~Sun, W.~Zhang, and M.~Zhang, ``A survey of 5g channel
  measurements and models,'' \emph{IEEE Communications Surveys \& Tutorials},
  vol.~20, no.~4, pp. 3142--3168, 2018.

\bibitem{huang2020general}
Z.~Huang and X.~Cheng, ``A general 3d space-time-frequency non-stationary model
  for 6g channels,'' \emph{IEEE Transactions on Wireless Communications},
  vol.~20, no.~1, pp. 535--548, 2020.

\bibitem{mecklenbrauker2011vehicular}
C.~F. Mecklenbrauker, A.~F. Molisch, J.~Karedal, F.~Tufvesson, A.~Paier,
  L.~Bernad{\'o}, T.~Zemen, O.~Klemp, and N.~Czink, ``Vehicular channel
  characterization and its implications for wireless system design and
  performance,'' \emph{Proceedings of the IEEE}, vol.~99, no.~7, pp.
  1189--1212, 2011.

\bibitem{AliNS0219}
A.~Ali, E.~D. Carvalho, and R.~W. Heath, ``Linear receivers in non-stationary
  massive mimo channels with visibility regions,'' \emph{IEEE Wireless
  Communications Letters}, vol.~8, no.~3, pp. 885--888, 2019.

\bibitem{OTFS_2018_Paper}
\BIBentryALTinterwordspacing
R.~Hadani, S.~Rakib, S.~Kons, M.~Tsatsanis, A.~Monk, C.~Ibars, J.~Delfeld,
  Y.~Hebron, A.~J. Goldsmith, A.~F. Molisch, and A.~R. Calderbank, ``Orthogonal
  time frequency space modulation,'' \emph{CoRR}, vol. abs/1808.00519, 2018.
  [Online]. Available: \url{http://arxiv.org/abs/1808.00519}
\BIBentrySTDinterwordspacing

\bibitem{2006MatzOP}
G.~Matz and F.~Hlawatsch, ``Time-varying communication channels: Fundamentals,
  recent developments, and open problems,'' in \emph{2006 14th European Signal
  Processing Conference}, 2006, pp. 1--5.

\bibitem{MATZ20111}
F.~Hlawatsch and G.~Matz, \emph{Wireless Communications Over Rapidly
  Time-Varying Channels}, 1st~ed.\hskip 1em plus 0.5em minus 0.4em\relax USA:
  Academic Press, Inc., 2011.

\bibitem{2018PMNS}
M.~Pätzold and C.~A. Gutierrez, ``Modelling of non-wssus channels with
  time-variant doppler and delay characteristics,'' in \emph{2018 IEEE Seventh
  International Conference on Communications and Electronics (ICCE)}, 2018, pp.
  1--6.

\bibitem{bian2021general}
J.~Bian, C.-X. Wang, X.~Gao, X.~You, and M.~Zhang, ``A general 3d
  non-stationary wireless channel model for 5g and beyond,'' 2021.

\bibitem{Matz2005NS}
G.~Matz, ``On non-wssus wireless fading channels,'' \emph{IEEE Transactions on
  Wireless Communications}, vol.~4, no.~5, pp. 2465--2478, 2005.

\bibitem{1909Mercer}
\BIBentryALTinterwordspacing
J.~Mercer, ``Functions of positive and negative type, and their connection with
  the theory of integral equations,'' \emph{Philosophical Transactions of the
  Royal Society of London. Series A, Containing Papers of a Mathematical or
  Physical Character}, vol. 209, pp. 415--446, 1909. [Online]. Available:
  \url{http://www.jstor.org/stable/91043}
\BIBentrySTDinterwordspacing

\bibitem{wang2008karhunen}
L.~Wang, \emph{Karhunen-Loeve expansions and their applications}.\hskip 1em
  plus 0.5em minus 0.4em\relax London School of Economics and Political Science
  (United Kingdom), 2008.

\bibitem{appendix_link}
\BIBentryALTinterwordspacing
Z.~Zou, M.~Careem, A.~Dutta, and N.~Thawdar, ``Proofs and supplementary
  material: Unified characterization and precoding for non-stationary
  channels,'' 2022. [Online]. Available: \url{https://arxiv.org/abs/2202.01827}
\BIBentrySTDinterwordspacing

\bibitem{2003MatzDP}
G.~Matz, ``Doubly underspread non-wssus channels: analysis and estimation of
  channel statistics,'' in \emph{2003 4th IEEE Workshop on Signal Processing
  Advances in Wireless Communications - SPAWC 2003 (IEEE Cat. No.03EX689)},
  2003, pp. 190--194.

\bibitem{2017GhazalNSmodel}
A.~Ghazal, Y.~Yuan, C.-X. Wang, Y.~Zhang, Q.~Yao, H.~Zhou, and W.~Duan, ``A
  non-stationary imt-advanced mimo channel model for high-mobility wireless
  communication systems,'' \emph{IEEE Transactions on Wireless Communications},
  vol.~16, no.~4, pp. 2057--2068, 2017.

\bibitem{almers2007survey}
P.~Almers, E.~Bonek, A.~Burr, N.~Czink, M.~Debbah, V.~Degli-esposti,
  H.~Hofstetter, P.~Kyosti, D.~Laurenson, G.~Matz, A.~F. Molisch, C.~Oestges,
  and H.~Ozcelik, \emph{Survey of channel and radio propagation models for
  wireless MIMO systems}.\hskip 1em plus 0.5em minus 0.4em\relax EURASIP
  Journal on Wireless Communications and Net-working, 2007.

\bibitem{2020CESrivastava}
S.~Srivastava, M.~S. Kumar, A.~Mishra, S.~Chopra, A.~K. Jagannatham, and
  L.~Hanzo, ``Sparse doubly-selective channel estimation techniques for ostbc
  mimo-ofdm systems: A hierarchical bayesian kalman filter based approach,''
  \emph{IEEE Transactions on Communications}, vol.~68, no.~8, pp. 4844--4858,
  2020.

\bibitem{2005CEXiaoli}
X.~Ma, L.~Yang, and G.~Giannakis, ``Optimal training for mimo
  frequency-selective fading channels,'' \emph{IEEE Transactions on Wireless
  Communications}, vol.~4, no.~2, pp. 453--466, 2005.

\bibitem{2008CEMilojevic}
M.~Milojevic, G.~Del~Galdo, and M.~Haardt, ``Tensor-based framework for the
  prediction of frequency-selective time-variant mimo channels,'' in \emph{2008
  International ITG Workshop on Smart Antennas}, 2008, pp. 147--152.

\bibitem{vu2006exploiting}
M.~H. Vu, ``Exploiting transmit channel side information in mimo wireless
  systems,'' Ph.D. dissertation, Stanford University, 2006.

\bibitem{HatakawaNLP2012}
Y.~Hatakawa, T.~Matsumoto, and S.~Konishi, ``Development and experiment of
  linear and non-linear precoding on a real-time multiuser-mimo testbed with
  limited csi feedback,'' in \emph{2012 IEEE 23rd International Symposium on
  Personal, Indoor and Mobile Radio Communications - (PIMRC)}, 2012, pp.
  1606--1611.

\bibitem{HasegawaTHP2018}
F.~Hasegawa, H.~Nishimoto, N.~Song, M.~Enescu, A.~Taira, A.~Okazaki, and
  A.~Okamura, ``Non-linear precoding for 5g nr,'' in \emph{2018 IEEE Conference
  on Standards for Communications and Networking (CSCN)}, 2018, pp. 1--7.

\bibitem{GuoTHP2020}
X.~Guo, D.~Yang, Z.~Luo, H.~Wang, and J.~Kuang, ``Robust thp design for energy
  efficiency of multibeam satellite systems with imperfect csi,'' \emph{IEEE
  Communications Letters}, vol.~24, no.~2, pp. 428--432, 2020.

\bibitem{DietrichTHP2007}
F.~A. Dietrich, P.~Breun, and W.~Utschick, ``Robust tomlinson–harashima
  precoding for the wireless broadcast channel,'' \emph{IEEE Transactions on
  Signal Processing}, vol.~55, no.~2, pp. 631--644, 2007.

\bibitem{CastanheiraPGS2013}
\BIBentryALTinterwordspacing
D.~Castanheira, A.~a. Silva, and A.~Gameiro, ``Linear and nonlinear precoding
  schemes for centralized multicell mimo-ofdm systems,'' \emph{Wirel. Pers.
  Commun.}, vol.~72, no.~1, p. 759–777, Sep. 2013. [Online]. Available:
  \url{https://doi.org/10.1007/s11277-013-1041-z}
\BIBentrySTDinterwordspacing

\bibitem{WangTHP2012}
R.~Wang, M.~Tao, and Z.~Xiang, ``Nonlinear precoding design for mimo
  amplify-and-forward two-way relay systems,'' \emph{IEEE Transactions on
  Vehicular Technology}, vol.~61, no.~9, pp. 3984--3995, 2012.

\bibitem{MazroueiTDVP2016}
M.~Mazrouei-Sebdani, W.~A. Krzymień, and J.~Melzer, ``Massive mimo with
  nonlinear precoding: Large-system analysis,'' \emph{IEEE Transactions on
  Vehicular Technology}, vol.~65, no.~4, pp. 2815--2820, 2016.

\bibitem{Jacobsson1DAC2017}
S.~Jacobsson, G.~Durisi, M.~Coldrey, T.~Goldstein, and C.~Studer, ``Quantized
  precoding for massive mu-mimo,'' \emph{IEEE Transactions on Communications},
  vol.~65, no.~11, pp. 4670--4684, 2017.

\bibitem{AndersonLP2008}
A.~L. Anderson, J.~R. Zeidler, and M.~A. Jensen, ``Reduced-feedback linear
  precoding with stable performance for the time-varying mimo broadcast
  channel,'' \emph{IEEE Journal on Selected Areas in Communications}, vol.~26,
  no.~8, pp. 1483--1493, 2008.

\bibitem{Zeng2012LP}
W.~Zeng, C.~Xiao, M.~Wang, and J.~Lu, ``Linear precoding for finite-alphabet
  inputs over mimo fading channels with statistical csi,'' \emph{IEEE
  Transactions on Signal Processing}, vol.~60, no.~6, pp. 3134--3148, 2012.

\bibitem{hadani2018OTFS}
R.~Hadani, S.~Rakib, S.~Kons, M.~Tsatsanis, A.~Monk, C.~Ibars, J.~Delfeld,
  Y.~Hebron, A.~J. Goldsmith, A.~F. Molisch, and R.~Calderbank, ``Orthogonal
  time frequency space modulation,'' 2018.

\end{thebibliography}

\end{document}